\newif\iffinal
\newif\ifmarek
\else\usepackage[notref,notcite]{showkeys}\fi
\newcommand{\mathscr}{\mathcal}}
\DeclareFontFamily{OT1}{eusb}{} \DeclareFontShape{OT1}{eusb}{m}{n}
{<5> <6> <7> <8> <9> <10> <11> <12> <14.4> eusb10}{}
\DeclareMathAlphabet{\eusb}{OT1}{eusb}{m}{n}
\DeclareFontFamily{OT1}{eusm}{} \DeclareFontShape{OT1}{eusm}{m}{n}
{<5> <6> <7> <8> <9> <10> <11> <12> <14.4> eusm10}{}
\DeclareMathAlphabet{\eusm}{OT1}{eusm}{m}{n}
\DeclareFontFamily{OT1}{eufm}{} \DeclareFontShape{OT1}{eufm}{m}{n}
{<5> <6> <7> <8> <9> <10> <11> <12> <14.4> eufm10}{}
\DeclareMathAlphabet{\mathfrak}{OT1}{eufm}{m}{n}
\DeclareFontFamily{OT1}{fraktura}{}
\DeclareFontShape{OT1}{fraktura}{m}{n} {<5> <6> <7> <8> <9> <10>
<11> <12> <13> <14.4> [1.1] eufm10}{}
\DeclareMathAlphabet{\fraktura}{OT1}{fraktura}{m}{n}
\DeclareFontFamily{OT1}{cmfi}{} \DeclareFontShape{OT1}{cmfi}{m}{n}
{<5> <6> <7> <8> <9> <10> <11> <12> <13> <14.4> [0.9] cmfi10}{}
\DeclareMathAlphabet{\cmfi}{OT1}{cmfi}{b}{n}
\DeclareFontFamily{OT1}{cmss}{} \DeclareFontShape{OT1}{cmss}{m}{n}
{<5> <6> <7> <8> <9> <10> <11> <12> <13> <14.4> cmss10}{}
\DeclareMathAlphabet{\cmss}{OT1}{cmss}{m}{n}
\newtheoremstyle{thm}{1.5ex}{1.5ex}{\itshape\rmfamily}{}
{\bfseries\rmfamily}{}{2ex}{}
\newtheoremstyle{def}{1.5ex}{1.5ex}{\slshape\rmfamily}{}
{\bfseries\rmfamily}{}{2ex}{}
\newtheoremstyle{rem}{1.3ex}{1.3ex}{\rmfamily}{}
{\itshape}
{} {1.5ex}{}
\theoremstyle{thm}
\newtheorem{theorem}{Theorem}[section]
\newtheorem{corollary}[theorem]{Corollary}
\theoremstyle{def}
\theoremstyle{rem}
\numberwithin{equation}{section}
\renewcommand{\subsection}{\secdef\subsct\sbsect}
\newcommand{\subsct}[2][default]{\refstepcounter{subsection}
\addcontentsline{toc}{subsection}
{{\tocsection{\!\!}{\hspace{1.2em}\thesubsection}{\!\!\!\!#1\dotfill}}{}}
\nopagebreak\vspace{0.45\baselineskip} {\flushleft\bf
\thesection.\arabic{subsection}~\bf #1.~}
\\*[3mm]\noindent
\nopagebreak}
\newcommand{\sbsect}[1]{\vspace{0.1cm}\noindent
\textbf{#1.~}\vspace{0.1cm}}
\renewcommand{\subsubsection}{%
\secdef \subsubsect\sbsbsect}
\newcommand{\subsubsect}[2][default]{%
\refstepcounter{subsubsection} 
\addcontentsline{toc}{subsubsection}{{\tocsection{\!\!}
{\hspace{3.05em}\thesubsubsection}{\!\!\!\!#1\dotfill}}{}}
\nopagebreak
\vspace{0.15\baselineskip} \nopagebreak {\flushleft\rmfamily
\itshape\arabic{section}.\arabic{subsection}.\arabic{subsubsection}
\ \rmfamily #1\/.}\ }
\newcommand{\sbsbsect}[1]{\vspace{0.1cm}\noindent
\rmfamily \itshape
\arabic{section}.\arabic{subsection}.\arabic{subsubsection} \
\sffamily #1\/.\ }
\newcommand{\N}{\mathbb N}
\newcommand{\R}{\mathbb R}
\newcommand{\scrF}{\mathscr{F}}
\title[]
{\large Fitting Generalized Tempered Stable distribution: \\
Fractional Fourier Transform (FRFT) Approach}
\begin{document}
\maketitle
\author{\centerline{A.H.Nzokem and V.T.Montshiwa}}

\begin{abstract}
The paper investigates the rich class of Generalized Tempered Stable (GTS) distribution, an alternative to Normal distribution and $\alpha$-Stable distribution for modelling asset return and many physical and economic systems. Firstly, we explore some important properties of the Generalized Tempered Stable  (GTS) distribution. The theoretical tools developed are used to perform empirical analysis. The GTS distribution is fitted using S\&P 500 index,  SPY ETF and Bitcoin BTC. The Fractional Fourier Transform (FRFT) technique evaluates the probability density function and its derivatives in the maximum likelihood procedure. Based on the results from the statistical inference and the Kolmogorov-Smirnov (K-S) goodness-of-fit, the GTS distribution fits the underlying distribution of the SPY ETF return. The right side of the Bitcoin BTC return, and the left side of the S\&P 500 return underlying distributions fit the Tempered Stable distribution; while the left side of the Bitcoin BTC return and the right side of the S\&P 500 return underlying distributions are modelled by the compound Poisson process.\\
  \noindent
  \keywords {Tempered Stable distributions, Geometric Brownian motion (GBM), Function Characteristic, Fractional Fourier Transform (FRFT), Maximum likelihood,   Kolmogorov-Smirnov (K-S)}
\end{abstract}

 \section {Introduction}
\noindent
Modelling the return assets with the normal distribution is the underlying assumption in many financial tools, such as the Black-Scholes-Merton option pricing model and the risk metric Variance-covariance technique to Value at Risk (VAR). However, substantial empirical evidence rejects the normal distribution for various asset classes and financial markets. The normal distribution's symmetric and rapidly decreasing tail properties cannot describe the skewed and fat-tailed properties of the high-frequency asset return distribution.\\
\noindent
The $\alpha$- stable distribution has been proposed\cite{ken1999levy,nolan2020univariate} as an alternative to the normal distribution for modelling asset return and many types of physical and economic systems. Besides the Theoretical and empirical arguments for modelling with a stable distribution, the third and most important argument is that the Central Limit theorem can be generalized by the stable distribution. Regardless of the variance nature (finite or infinite), an appropriately standardized large sum of independent identical distribution (i.i.d) random variable converges to an $\alpha$- stable random variable\cite{rachev2011financial}. Although the stable distribution allows for varying degrees of tail heaviness and skewness; it has two major drawbacks \cite{nolan2020univariate}: firstly, the lack of closed formulas for densities and distribution functions, except for the Normal distribution($\alpha=2$), the Cauchy distribution ($\alpha=1$) and the L\'evy distribution ($\alpha=\frac{1}{2}$); secondly, most of the moments of the stable distribution are infinite. An infinite variance of the asset return leads to infinite price for derivative instruments such as options.\\
\noindent
The Tempered Stable (TS) distribution TS(\textbf{$\beta$},\textbf{$\alpha$},\textbf{$\lambda$}) was developed to overcome the shortcomings from those two distributions speciﬁcally in modelling high-frequency asset returns. The tails of the TS distribution for asset returns are heavier than the normal distribution but thinner than the stable distribution\cite{grabchak2010financial}. A more general form of the Tempered Stable distribution, called Generalized Tempered Stable (GTS) distribution GTS(\textbf{$\beta_{+}$}, \textbf{$\beta_{-}$}, \textbf{$\alpha_{+}$},\textbf{$\alpha_{-}$}, \textbf{$\lambda_{+}$}, \textbf{$\lambda_{-}$}) will be considerated in the paper. The L\'evy measure ($V(dx)$) of the Generalized Tempered Stable distribution is provided by (\ref{eq:l1}).
 \begin{align}
V(dx) =\left(\frac{\alpha_{+}e^{-\lambda_{+}x}}{x^{1+\beta_{+}}} \boldsymbol{1}_{x>0} + \frac{\alpha_{-}e^{-\lambda_{-}|x|}}{|x|^{1+\beta_{-}}} \boldsymbol{1}_{x<0}\right) dx \label{eq:l1}
 \end{align}
\noindent 
where $0\leq \beta_{+}\leq 1$, $0\leq \beta_{-}\leq 1$, $\alpha_{+}\geq 0$, $\alpha_{-}\geq 0$, $\lambda_{+}\geq 0$ and  $\lambda_{-}\geq 0$.\\
\noindent
The Generalized Tempered Stable Distribution can be used to control the level of skewness, tail heaviness and symmetry of the distribution. In addition, It has finite moments; just like the stable distributions, the closed-form formulas exist only for characteristic function and not for the density or the distribution function.\\
\noindent
The paper aims to evaluate the assumption that the asset return distribution follows a Generalized Tempered Stable distribution GTS(\textbf{$\beta_{+}$},\textbf{$\beta_{-}$},\textbf{$\alpha_{+}$},\textbf{$\alpha_{-}$},\textbf{$\lambda_{+}$},\textbf{$\lambda_{-}$}) and to compare their fitting performance to the Normal distribution , which is the standard in practice. The statistical inference is based on the Maximum Likelihood (ML) method. The Fractional Fourier Transform (FRFT) technique will be implemented to provide a good approximation of the density function and its derivatives in the optimization process. The data comes from three sources: the Standard $\&$ Poors 500 Composite Stock Price Index (S$\&$P 500), the SPDRS $\&$ P 500 ETF (SPY ETF) and the Bitcoin BTC. and the period spans from January 4, 2010, to February 02, 2022.
\noindent
The paper is structured as follows: the following section presents the theoretical framework of the Generalized Tempered Stable (GTS) distribution. The characteristic exponent closed-form formula is developed base on the L\'evy measure of the GTS distribution. The third section reviews the Maximum likelihood method implemented in the optimization process. The fourth section presents the Generalized Tempered Stable (GTS) distribution fitting results and the theoretical moments. And in the fifth section, the GTS distribution and the sample data distribution are compared through the Kolmogorov-Smirnov (KS) goodness-of-fit test.
\section{Tempered Stable Distribution}
\noindent
The L\'evy measure of the generalized tempered stable (GTS) distribution ($V(dx)$) is defined in (\ref{eq:l23}) as  a product of a tempering function ($q(x)$) and a L\'evy measure of the $\alpha$-stable distribution ($V_{stable}(dx)$).
 \begin{align}
q(x) &= e^{-\lambda_{+}x} \boldsymbol{1}_{x>0} + e^{-\lambda_{-}|x|} \boldsymbol{1}_{x<0} \label{eq:l21}\\
V_{stable}(dx) &=\left(\frac{\alpha_{+}}{x^{1+\beta_{+}}} \boldsymbol{1}_{x>0} + \frac{\alpha_{-}}{|x|^{1+\beta_{-}}} \boldsymbol{1}_{x<0}\right) dx \label{eq:l22}\\
V(dx) =q(x)V_{stable}(dx)&=\left(\frac{\alpha_{+}e^{-\lambda_{+}x}}{x^{1+\beta_{+}}} \boldsymbol{1}_{x>0} + \frac{\alpha_{-}e^{-\lambda_{-}|x|}}{|x|^{1+\beta_{-}}} \boldsymbol{1}_{x<0}\right) dx \label{eq:l23}
 \end{align}
\noindent 
where $0\leq \beta_{+}\leq 1$, $0\leq \beta_{-}\leq 1$, $\alpha_{+}\geq 0$, $\alpha_{-}\geq 0$, $\lambda_{+}\geq 0$ and  $\lambda_{-}\geq 0$. $\alpha_{+}$ and $\alpha_{-}$ are the scale parameter, also called the process intensity \cite{boyarchenko2002non} and has a similar role as the variance parameter in the Brownian motion process. These parameters play an essential role in the Levy process. $\lambda_{+}$ and $\lambda_{+}$ control the rate of decay on the positive and negative tails, respectively. When $\lambda_{+}>\lambda_{-}$, the distribution is skewed to the left; the left tail is thicker. When $\lambda_{+} < \lambda_{-}$, the distribution is skewed to the right. And when $\lambda_{+}=\lambda_{-}$, the distribution is symmetric \cite{rachev2011financial}. $\beta_{+}$ and $\beta_{-}$ are the indexes of stability, also called tail indexes, tail exponents or characteristic exponents, which determine the rate at which the tails of the distribution taper off \cite{borak2005stable}. See \cite{kuchler2013tempered,rachev2011financial} For more details on tempering function and L\'evy measure of tempered stable distribution.\\
\noindent
The Generalized Tempered Stable (GTS) distribution can be denoted by $X\sim GTS(\textbf{$\beta_{+}$}, \textbf{$\beta_{-}$}, \textbf{$\alpha_{+}$},\textbf{$\alpha_{-}$}, \textbf{$\lambda_{+}$}, \textbf{$\lambda_{-}$})$ where $X=X_{+} - X_{-}$ with $X_{+} \geq 0$, $X_{-}\geq 0$. $X_{+}\sim TS(\textbf{$\beta_{+}$}, \textbf{$\alpha_{+}$},\textbf{$\lambda_{+}$}, \textbf{$\lambda_{-}$})$ and  $X_{-}\sim TS(\textbf{$\beta_{-}$}, \textbf{$\alpha_{-}$},\textbf{$\lambda_{-}$}, \textbf{$\lambda_{-}$})$. 
\begin{align}
  \int_{-\infty}^{+\infty} V(dx) &=\begin{cases}
  +\infty   & \quad \text{if } {\beta_{+}\ge 0 \quad  \text{or} \quad \beta_{-} \ge 0}   \\
  \alpha_{+}{\lambda_{+}}^{\beta_{+}}\Gamma(-\beta_{+}) +  \alpha_{-}{\lambda_{-}}^{\beta_{-}}\Gamma(-\beta_{-})  & \quad \text{if }{ \beta_{+} < 0 \quad  \text{and } \quad \beta_{-} < 0 }\end{cases} \label{eq:l24}
     \end{align} 
From (\ref{eq:l24}), it results that when $\beta_{+} < 0$, TS(\textbf{$\beta_{+}$}, \textbf{$\alpha_{+}$},\textbf{$\lambda_{+}$}) is of finite activity and can be written as a Compound Poisson process on the right side ($X_{+}$). we have similar pattern when $\beta_{-} < 0$.\\
\noindent
However, when $0\le \beta_{+} \le 1$, $X_{+}$ is an infinite activities process with an infinite number of jumps in any given time interval. We have similar pattern when $0 \le \beta_{-} \le 1$. In addition to the infinite activities process, we have 
\begin{align}
  \int_{-\infty}^{+\infty} min(1,|x|)V(dx) <+ \infty \label{eq:l25} \end{align}
\noindent
By adding the location parameter of the distribution, the GTS distribution becomes GTS(\textbf{$\mu$}, \textbf{$\beta_{+}$}, \textbf{$\beta_{-}$}, \textbf{$\alpha_{+}$},\textbf{$\alpha_{-}$}, \textbf{$\lambda_{+}$}, \textbf{$\lambda_{-}$}) and we have:
 \begin{align}
Y=\mu +X \quad \quad  Y\sim GTS(\mu, \beta_{+}, \beta_{-}, \alpha_{+}, \alpha_{-},\lambda_{+}, \lambda_{-}) \label {eq:l26}
  \end{align}

\subsection{Tempered Stable Distribution and L\'evy Process}
\begin{theorem}\label{lem5} \ \\ 
Consider a variable $Y \sim GTS(\textbf{$\mu$}, \textbf{$\beta_{+}$}, \textbf{$\beta_{-}$}, \textbf{$\alpha_{+}$},\textbf{$\alpha_{-}$}, \textbf{$\lambda_{+}$}, \textbf{$\lambda_{-}$})$. The characteristic exponent can be written
  \begin{align}
\Psi(\xi)=\mu\xi i + \alpha_{+}\Gamma(-\beta_{+})\left((\lambda_{+} - i\xi)^{\beta_{+}} - {\lambda_{+}}^{\beta_{+}}\right) + \alpha_{-}\Gamma(-\beta_{-})\left((\lambda_{-} + i\xi)^{\beta_{-}} - {\lambda_{-}}^{\beta_{-}}\right) \label {eq:l27}
  \end{align}
\end{theorem}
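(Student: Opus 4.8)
The plan is to read $\Psi$ off the L\'evy--Khintchine representation, using the L\'evy measure (\ref{eq:l23}) directly. Write $Y=\mu+X$ as in (\ref{eq:l26}) with $X=X_{+}-X_{-}$, where $X_{+},X_{-}$ are independent, drift--free, tempered--stable subordinators supported on $[0,\infty)$ with L\'evy densities $\alpha_{+}e^{-\lambda_{+}x}x^{-1-\beta_{+}}$ and $\alpha_{-}e^{-\lambda_{-}x}x^{-1-\beta_{-}}$ respectively. Since a subordinator carries no compensator and $\int_{\R}\min(1,|x|)V(dx)<+\infty$ by (\ref{eq:l25}), no truncation term is needed, and writing $E[e^{i\xi Y}]=e^{\Psi(\xi)}$ we get
\[
  \Psi(\xi)=\mu\xi i+\int_{\R}\bigl(e^{i\xi x}-1\bigr)V(dx)=\mu\xi i+I_{+}(\xi)+I_{-}(\xi),
\]
where, after the substitution $x\mapsto-x$ on the negative half--line,
\[
  I_{+}(\xi)=\int_{0}^{\infty}\bigl(e^{i\xi x}-1\bigr)\frac{\alpha_{+}e^{-\lambda_{+}x}}{x^{1+\beta_{+}}}\,dx,\qquad I_{-}(\xi)=\int_{0}^{\infty}\bigl(e^{-i\xi x}-1\bigr)\frac{\alpha_{-}e^{-\lambda_{-}x}}{x^{1+\beta_{-}}}\,dx.
\]
Everything then reduces to evaluating these two one--sided integrals in closed form.

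For $I_{+}$ I would differentiate under the integral sign in $\xi$; this is legitimate because $x\,e^{-\lambda_{+}x}x^{-1-\beta_{+}}=x^{-\beta_{+}}e^{-\lambda_{+}x}$ is integrable on $(0,\infty)$ whenever $\beta_{+}<1$. Using the Gamma integral $\int_{0}^{\infty}e^{-sx}x^{-\beta}\,dx=\Gamma(1-\beta)\,s^{\beta-1}$, valid for $\RE s>0$ with the principal branch, one gets
\[
  \frac{1}{\alpha_{+}}\,\frac{d}{d\xi}I_{+}(\xi)=i\int_{0}^{\infty}e^{-(\lambda_{+}-i\xi)x}x^{-\beta_{+}}\,dx=i\,\Gamma(1-\beta_{+})\,(\lambda_{+}-i\xi)^{\beta_{+}-1}.
\]
Integrating from $0$ (where $I_{+}(0)=0$) along the segment $t\mapsto\lambda_{+}-it$, which stays in the open right half--plane and so off the branch cut of $z\mapsto z^{\beta_{+}}$, and using $\Gamma(1-\beta_{+})=-\beta_{+}\Gamma(-\beta_{+})$, yields
\[
  I_{+}(\xi)=\alpha_{+}\Gamma(-\beta_{+})\bigl((\lambda_{+}-i\xi)^{\beta_{+}}-\lambda_{+}^{\beta_{+}}\bigr).
\]
Repeating the computation verbatim with $-i\xi$ in place of $i\xi$ (this sign flip is exactly the effect of the substitution $x\mapsto-x$) gives $I_{-}(\xi)=\alpha_{-}\Gamma(-\beta_{-})\bigl((\lambda_{-}+i\xi)^{\beta_{-}}-\lambda_{-}^{\beta_{-}}\bigr)$, and substituting back into the display above produces (\ref{eq:l27}). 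An equivalent, computation--lighter route is to recognize the right--hand side of (\ref{eq:l27}) as a Frullani--type integral of the L\'evy density against $e^{i\xi x}-1$, but the differentiation argument is the cleanest to justify rigorously.

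The step I expect to need the most care is not the integral itself but the degenerate parameter values permitted by $0\le\beta_{\pm}\le1$. At $\beta_{+}=0$ the prefactor $\Gamma(-\beta_{+})$ has a simple pole while $(\lambda_{+}-i\xi)^{\beta_{+}}-\lambda_{+}^{\beta_{+}}\to0$, so (\ref{eq:l27}) must be read as the removable limit, which equals $\alpha_{+}\log\!\bigl(\lambda_{+}/(\lambda_{+}-i\xi)\bigr)$ and is precisely what $I_{+}$ evaluates to directly; the same remark applies to $\beta_{-}=0$, and one should check separately the borderline index $\beta_{\pm}=1$, where the integrability used above (and in (\ref{eq:l25})) is marginal. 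Apart from this, the remaining points are routine: verifying the dominated--convergence hypotheses for differentiating under the integral, keeping the principal branch of the power function fixed so that the antiderivative step is valid, and the bookkeeping that the location term $\mu$ from (\ref{eq:l26}) contributes exactly the $\mu\xi i$ summand.
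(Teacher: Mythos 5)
Your proposal is correct, and it reaches (\ref{eq:l27}) by a genuinely different computational route than the paper. Both proofs share the same skeleton: apply the L\'evy--Khintchine representation to $Y=\mu+X_{+}-X_{-}$, reduce to the two one-sided integrals $I_{\pm}(\xi)=\int_{0}^{\infty}(e^{\pm i\xi x}-1)\alpha_{\pm}e^{-\lambda_{\pm}x}x^{-1-\beta_{\pm}}\,dx$, and evaluate them in closed form. The difference is in that evaluation. The paper expands $e^{iy\xi}-1$ as a power series, integrates term by term against the Gamma kernel to get $\alpha_{+}\lambda_{+}^{\beta_{+}}\Gamma(-\beta_{+})\sum_{k\ge1}\binom{\beta_{+}}{k}(-i\xi/\lambda_{+})^{k}$, and resums via the binomial series; you instead differentiate under the integral sign, obtain $\frac{d}{d\xi}I_{+}(\xi)=i\alpha_{+}\Gamma(1-\beta_{+})(\lambda_{+}-i\xi)^{\beta_{+}-1}$ from the Gamma integral, and integrate back along $t\mapsto\lambda_{+}-it$ using $\Gamma(1-\beta_{+})=-\beta_{+}\Gamma(-\beta_{+})$. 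Your route buys something concrete: the binomial resummation in the paper converges only for $|\xi|<\lambda_{+}$ and tacitly needs analytic continuation to cover all real $\xi$, whereas your differentiation argument is valid for every $\xi\in\R$ at once, provided you verify the dominated-convergence hypothesis you cite (which holds for $\beta_{+}<1$). You also correctly flag the degenerate endpoints: at $\beta_{\pm}=0$ the formula must be read as a removable limit (this is exactly the paper's Theorem~\ref{lem6}), and at $\beta_{\pm}=1$ both the prefactor $\Gamma(-\beta_{\pm})$ and the integrability condition (\ref{eq:l25}) break down --- a boundary case the paper's proof passes over silently. The only nitpick is that your claim that ``a subordinator carries no compensator'' is doing real work and deserves a sentence: it is precisely the finiteness of $\int\min(1,|x|)V(dx)$ that lets one use the untruncated integrand $e^{i\xi x}-1$ rather than $e^{i\xi x}-1-i\xi x\1_{|x|\le1}$, and that is the same appeal to the L\'evy--Khintchine theorem for non-negative (finite-variation) processes that the paper makes.
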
 
\begin{proof} \ \\
\noindent
(\ref{eq:l23}) is a L\'evy measure and (\ref{eq:l24}) is satisfied. We can applied the L\'evy-Khintchine representation for non-negative processes theorem. $Y=\mu +X =\mu + X_{+} - X_{-}$
\begin{align}
\Psi(\xi)&= Log\left(Ee^{i Y\xi}\right)=i\mu\xi + Log\left(Ee^{i X_{+}\xi}\right) + Log\left(Ee^{-i X_{-}\xi}\right) \label {eq:l28a}\\
&= i\mu\xi + \int_{0}^{+\infty} \left(e^{iy\xi} -1\right)\frac{\alpha_{+}e^{-\lambda_{+}y}}{y^{1+\beta_{+}}}dy + \int_{0}^{+\infty} \left(e^{-iy\xi} -1\right)\frac{\alpha_{-}e^{-\lambda_{-}y}}{y^{1+\beta_{-}}}dy \label{eq:l28b}\end{align}

 \begin{align}
 \int_{0}^{+\infty} \left(e^{iy\xi} -1\right)\frac{\alpha_{+}e^{-\lambda_{+}y}}{y^{1+\beta_{+}}}dy&=\alpha_{+}\lambda_{+}^{\beta_{+}}\Gamma(-\beta_{+})\sum_{k=1}^{+\infty}{\frac{\Gamma(k-\beta_{+})}{\Gamma(-\beta_{+})k!}(\frac{i\xi}{\lambda_{+}})^{k}}\label{eq:l28c}\\
 &=\alpha_{+}\lambda_{+}^{\beta_{+}}\Gamma(-\beta_{+})\sum_{k=1}^{+\infty}{\binom{\beta_{+}}{k}(-\frac{i\xi}{\lambda_{+}})^{k}}\label{eq:l28d}\\
 &=\alpha_{+}\Gamma(-\beta_{+}) \left((\lambda_{+} - i\xi)^{\beta_{+}} - \lambda_{+}^{\beta_{+}}\right)\label{eq:l28e} \end{align}
 Similarly, we have :
 \begin{align}
 \int_{0}^{+\infty} \left(e^{-iy\xi} -1\right)\frac{\alpha_{-}e^{-\lambda_{-}y}}{y^{1+\beta_{-}}}dy=\alpha_{-}\Gamma(-\beta_{-}) \left((\lambda_{-} + i\xi)^{\beta_{-}} - \lambda_{-}^{\beta_{-}}\right) \label{eq:l11g}
  \end{align} 
 \noindent 
 The expression in (\ref{eq:l28a}) becomes:
 \begin{align*}
\Psi(\xi)=i\mu\xi + \alpha_{+}\Gamma(-\beta_{+}) \left((\lambda_{+} - i\xi)^{\beta_{+}} - \lambda_{+}^{\beta_{+}}\right) + \alpha_{-}\Gamma(-\beta_{-}) \left((\lambda_{-} + i\xi)^{\beta_{-}} - \lambda_{-}^{\beta_{-}}\right)
\end{align*}
\end{proof}

\begin{theorem}\label{lem6} \ \\ 
Consider a variable $Y \sim GTS(\textbf{$\mu$}, \textbf{$\beta_{+}$}, \textbf{$\beta_{-}$}, \textbf{$\alpha_{+}$},\textbf{$\alpha_{-}$}, \textbf{$\lambda_{+}$}, \textbf{$\lambda_{-}$})$. \\
When $(\beta_{-},\beta_{+}) \to (0,0)$, GTS becomes Bilateral Gamma distribution with the characteristic exponent:
   \begin{align}
\Psi(\xi)=\mu\xi i - \alpha_{+}\log\left(1-\frac{1}{\lambda_{+}}i\xi\right)- \alpha_{-}\log\left(1+\frac{1}{\lambda_{+}}i\xi\right) \label {eq:l27a}
\end{align}
When $(\beta_{-},\beta_{+}) \to (0,0)$ and $\alpha_{-}=\alpha_{+}=\alpha$, GTS becomes Variance-Gamma (VG) distribution $(\mu,\lambda_{-}-\lambda_{+},1,\alpha,\frac{1}{\lambda_{-}\lambda_{+}})$ and the characteristic exponent:
   \begin{align}
\Psi(\xi)=\mu\xi i - \alpha \log \left(1 - \frac{\lambda_{-}-\lambda_{+}}{\lambda_{+}\lambda_{-}} i\xi +\frac{1}{\lambda_{+}\lambda_{-}}\xi^2 \right) \label {eq:l27b}
  \end{align}
\end{theorem}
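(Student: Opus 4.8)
The plan is to pass to the limit $(\beta_-,\beta_+)\to(0,0)$ directly in the closed-form characteristic exponent \eqref{eq:l27} of Theorem~\ref{lem5}, handling the two non-linear terms separately. The key elementary fact is that for any complex $a$ with $\RE a>0$ and any $b>0$,
\begin{align}
\lim_{\beta\to 0}\Gamma(-\beta)\bigl(a^{\beta}-b^{\beta}\bigr)=-\log\!\frac{a}{b},\label{eq:keylim}
\end{align}
which follows by writing $\Gamma(-\beta)=-\Gamma(1-\beta)/\beta$, Taylor expanding $a^{\beta}-b^{\beta}=\beta\log(a/b)+O(\beta^{2})$ along the principal branch (legitimate because $\RE a>0$ keeps $a$ off the cut), and using $\Gamma(1-\beta)\to 1$; the Euler constant coming from $\Gamma(1-\beta)=1+\gamma\beta+O(\beta^2)$ gets killed by the factor $\beta$. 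Alternatively one can start from the series \eqref{eq:l28c}--\eqref{eq:l28d}: after cancelling $\Gamma(-\beta_{+})$ this reads $\alpha_{+}\lambda_{+}^{\beta_{+}}\sum_{k\ge1}\frac{\Gamma(k-\beta_{+})}{k!}(i\xi/\lambda_{+})^{k}$, and a termwise limit $\beta_{+}\to0$ gives $\alpha_{+}\sum_{k\ge1}\frac1k(i\xi/\lambda_{+})^{k}=-\alpha_{+}\log(1-i\xi/\lambda_{+})$.

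Applying \eqref{eq:keylim} with $a=\lambda_{+}-i\xi$, $b=\lambda_{+}$ to the ``$+$'' term of \eqref{eq:l27} and with $a=\lambda_{-}+i\xi$, $b=\lambda_{-}$ to the ``$-$'' term, while leaving $\mu\xi i$ untouched, yields
\begin{align*}
\Psi(\xi)\;\longrightarrow\;\mu\xi i-\alpha_{+}\log\!\Bigl(1-\tfrac{1}{\lambda_{+}}i\xi\Bigr)-\alpha_{-}\log\!\Bigl(1+\tfrac{1}{\lambda_{-}}i\xi\Bigr),
\end{align*}
which is \eqref{eq:l27a}; recognising the right-hand side as the log-characteristic function of a difference of two independent Gamma variables identifies the limit law as the Bilateral Gamma distribution. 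For the Variance--Gamma case, put $\alpha_{+}=\alpha_{-}=\alpha$ and merge the logarithms:
\begin{align*}
\Psi(\xi)=\mu\xi i-\alpha\log\!\Bigl[\Bigl(1-\tfrac{1}{\lambda_{+}}i\xi\Bigr)\Bigl(1+\tfrac{1}{\lambda_{-}}i\xi\Bigr)\Bigr].
\end{align*}
Expanding the product gives $1+\frac{\lambda_{+}-\lambda_{-}}{\lambda_{+}\lambda_{-}}i\xi+\frac{1}{\lambda_{+}\lambda_{-}}\xi^{2}=1-\frac{\lambda_{-}-\lambda_{+}}{\lambda_{+}\lambda_{-}}i\xi+\frac{1}{\lambda_{+}\lambda_{-}}\xi^{2}$, which is exactly \eqref{eq:l27b}; matching this with the standard form of the VG characteristic exponent reads off the parameters $(\mu,\lambda_{-}-\lambda_{+},1,\alpha,\tfrac{1}{\lambda_{-}\lambda_{+}})$.

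The remaining work is bookkeeping. If one uses the series route, the termwise passage to the limit must be justified (dominated convergence, using a $\xi$-uniform bound on $\Gamma(k-\beta_{+})/k!$ for $\beta_+$ near $0$), and since $\sum_k\frac1k z^k$ only converges for $|z|\le1$, the identification for $|\xi|>\lambda_{+}$ (resp.\ $|\xi|>\lambda_{-}$) must be completed by analytic continuation in $\xi$; using the closed form \eqref{eq:keylim} sidesteps this restriction entirely. One must also keep the branch of the complex power consistent, which is automatic since $\RE(\lambda_{+}-i\xi)=\lambda_{+}>0$ and $\RE(\lambda_{-}+i\xi)=\lambda_{-}>0$. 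Finally, to upgrade ``the characteristic exponent converges'' to the distributional statement ``GTS becomes the Bilateral Gamma (resp.\ VG) distribution'' one invokes Lévy's continuity theorem. The main obstacle is thus not conceptual but the careful handling of the $\Gamma(-\beta)$ singularity cancelling against the vanishing bracket in \eqref{eq:keylim}, together with the branch and convergence-domain issues just noted.
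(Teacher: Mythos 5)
Your proposal is correct and follows essentially the same route as the paper: both pass to the limit in the closed form \eqref{eq:l27} using $\Gamma(-\beta)=-\Gamma(1-\beta)/\beta$ together with the first-order expansion of $a^{\beta}-b^{\beta}$, and then merge the two logarithms in the equal-$\alpha$ case to read off the Variance--Gamma form. You simply supply more of the supporting detail (branch choice, L\'evy continuity, the alternative series route) than the paper does, and your $\lambda_{-}$ in the second logarithm corrects what is evidently a typo ($\lambda_{+}$) in the paper's displayed formula \eqref{eq:l27a}.
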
 
\begin{proof} 
\begin{align}
\Gamma(-\beta_{+})=-\frac{\Gamma(1-\beta_{+})}{\beta_{+}} \quad \quad \lim_{\beta_{+} \to 0} \Gamma(-\beta_{+})\left((\lambda_{+} - i\xi)^{\beta_{+}} - \lambda_{+}^{\beta_{+}}\right)=- \log\left(1-\frac{1}{\lambda_{+}}i\xi\right)\label {eq:l27c}\end{align}
Similarly, (\ref{eq:l27c}) works for  $\beta_{-} \to 0$. 
From (\ref{eq:l27}), the characteristic exponent becomes (\ref{eq:l27a}).\\
In addition, if $\alpha_{-}=\alpha_{+}=\alpha$, from (\ref{eq:l27a}), the characteristic exponent becomes (\ref{eq:l27b}), which is a Variance-Gamma distribution with parameter $(\mu,\lambda_{-}-\lambda_{+},1,\alpha,\frac{1}{\lambda_{-}\lambda_{+}})$. \\
See\cite{nzokem2022} for more details on Variance-Gamma (VG) model and VG parameter notation.
\end{proof}

 \begin{theorem}\label{lem7} (Cumulants $\kappa_{k}$)\\
 Consider a variable $Y \sim GTS(\textbf{$\mu$}, \textbf{$\beta_{+}$}, \textbf{$\beta_{-}$}, \textbf{$\alpha_{+}$},\textbf{$\alpha_{-}$}, \textbf{$\lambda_{+}$}, \textbf{$\lambda_{-}$})$. The cumulants $\kappa_{k}$ of the Generalized Tempered Stable distribution is defined as follows.
  \begin{align}
\kappa_{1}= \mu + \alpha_{+}{\frac{\Gamma(1-\beta_{+})}{\lambda_{+}^{1-\beta_{+}}}} - \alpha_{-}{\frac{\Gamma(1-\beta_{-})}{\lambda_{-}^{1-\beta_{-}}}} \quad
 \kappa_{k}=\alpha_{+}{\frac{\Gamma(k-\beta_{+})}{\lambda_{+}^{k-\beta_{+}}}} + (-1)^{k} \alpha_{-}{\frac{\Gamma(k-\beta_{-})}{\lambda_{-}^{k-\beta_{-}}}}  \label {eq:l12}
  \end{align}
 \end{theorem}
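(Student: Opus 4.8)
The plan is to obtain the cumulants by expanding the closed-form characteristic exponent $\Psi(\xi)$ of Theorem~\ref{lem5} as a power series in $i\xi$ about $\xi=0$, and then comparing with the defining expansion $\Psi(\xi)=\log\!\big(E e^{iY\xi}\big)=\sum_{k\ge1}\kappa_{k}\,\frac{(i\xi)^{k}}{k!}$ of the cumulant generating function. Since each $\kappa_{k}$ is by definition $k!$ times the coefficient of $(i\xi)^{k}$ in this series (equivalently $\kappa_{k}=(-i)^{k}\Psi^{(k)}(0)$), it suffices to identify those coefficients.

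The main computation reuses the generalized binomial expansion already carried out in~(\ref{eq:l28c})--(\ref{eq:l28e}): for $|\xi|<\lambda_{+}$,
\begin{align*}
\alpha_{+}\Gamma(-\beta_{+})\Big((\lambda_{+}-i\xi)^{\beta_{+}}-\lambda_{+}^{\beta_{+}}\Big)
=\alpha_{+}\lambda_{+}^{\beta_{+}}\Gamma(-\beta_{+})\sum_{k=1}^{\infty}\frac{\Gamma(k-\beta_{+})}{\Gamma(-\beta_{+})\,k!}\Big(\frac{i\xi}{\lambda_{+}}\Big)^{k}
=\sum_{k=1}^{\infty}\alpha_{+}\,\frac{\Gamma(k-\beta_{+})}{\lambda_{+}^{\,k-\beta_{+}}}\,\frac{(i\xi)^{k}}{k!}.
\end{align*}
Running the same computation on the negative tail — which amounts to replacing $i\xi$ by $-i\xi$ in~(\ref{eq:l11g}) — gives, for $|\xi|<\lambda_{-}$,
\begin{align*}
\alpha_{-}\Gamma(-\beta_{-})\Big((\lambda_{-}+i\xi)^{\beta_{-}}-\lambda_{-}^{\beta_{-}}\Big)
=\sum_{k=1}^{\infty}\alpha_{-}\,\frac{\Gamma(k-\beta_{-})}{\lambda_{-}^{\,k-\beta_{-}}}\,\frac{(-i\xi)^{k}}{k!}
=\sum_{k=1}^{\infty}(-1)^{k}\,\alpha_{-}\,\frac{\Gamma(k-\beta_{-})}{\lambda_{-}^{\,k-\beta_{-}}}\,\frac{(i\xi)^{k}}{k!}.
\end{align*}
Writing the drift term as $i\mu\xi=\mu\,\frac{(i\xi)^{1}}{1!}$ and inserting both expansions into~(\ref{eq:l27}) yields, for $|\xi|<\min(\lambda_{+},\lambda_{-})$,
\begin{align*}
\Psi(\xi)=\mu\,\frac{(i\xi)^{1}}{1!}+\sum_{k=1}^{\infty}\left(\alpha_{+}\,\frac{\Gamma(k-\beta_{+})}{\lambda_{+}^{\,k-\beta_{+}}}+(-1)^{k}\alpha_{-}\,\frac{\Gamma(k-\beta_{-})}{\lambda_{-}^{\,k-\beta_{-}}}\right)\frac{(i\xi)^{k}}{k!}.
\end{align*}
Comparing the coefficient of $(i\xi)^{k}/k!$ with $\sum_{k\ge1}\kappa_{k}(i\xi)^{k}/k!$ then reads off $\kappa_{1}=\mu+\alpha_{+}\Gamma(1-\beta_{+})/\lambda_{+}^{1-\beta_{+}}-\alpha_{-}\Gamma(1-\beta_{-})/\lambda_{-}^{1-\beta_{-}}$ (the $\mu$ contributing only to the linear term) and, for $k\ge2$, $\kappa_{k}=\alpha_{+}\Gamma(k-\beta_{+})/\lambda_{+}^{k-\beta_{+}}+(-1)^{k}\alpha_{-}\Gamma(k-\beta_{-})/\lambda_{-}^{k-\beta_{-}}$, which is precisely~(\ref{eq:l12}).

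The two binomial expansions and the sign bookkeeping on the negative tail (the factor $(-1)^{k}$ produced by the substitution $i\xi\mapsto-i\xi$) are routine, as is the observation that the $\Gamma(-\beta_{\pm})$ prefactor cancels in each series coefficient so that the formula extends continuously through $\beta_{\pm}=0$. The one point that genuinely needs justification — and the main obstacle — is the legitimacy of reading the cumulants off as Taylor coefficients: this is valid because $\Psi$ is analytic in a disc around $\xi=0$, the radius $\min(\lambda_{+},\lambda_{-})$ being strictly positive exactly because $\lambda_{+},\lambda_{-}>0$; equivalently, the exponential decay of the tails of $V(dx)$ forces the moment generating function $E[e^{tY}]$ to be finite for $|t|<\min(\lambda_{+},\lambda_{-})$, so every $\kappa_{k}$ is finite and equals $(-i)^{k}\Psi^{(k)}(0)$. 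If one prefers to differentiate directly rather than expand, the same values follow from $\frac{d^{k}}{d\xi^{k}}(\lambda_{+}-i\xi)^{\beta_{+}}\big|_{\xi=0}=(-i)^{k}\,\beta_{+}(\beta_{+}-1)\cdots(\beta_{+}-k+1)\,\lambda_{+}^{\beta_{+}-k}$ together with the identity $\Gamma(-\beta_{+})\,\beta_{+}(\beta_{+}-1)\cdots(\beta_{+}-k+1)=(-1)^{k}\Gamma(k-\beta_{+})$ (and its analogue for the negative tail), but the series route is the shorter one.
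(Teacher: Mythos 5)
Your proof is correct and follows essentially the same route as the paper: both arrive at the identical power series $\sum_{k\ge 1}\left(\alpha_{+}\frac{\Gamma(k-\beta_{+})}{\lambda_{+}^{k-\beta_{+}}}+(-1)^{k}\alpha_{-}\frac{\Gamma(k-\beta_{-})}{\lambda_{-}^{k-\beta_{-}}}\right)\frac{(i\xi)^{k}}{k!}$ and read off the cumulants by comparing coefficients with $\sum_{k}\kappa_{k}(i\xi)^{k}/k!$. The only difference is the entry point --- the paper expands the integral representation (\ref{eq:l28b}) directly into (\ref{eq:l12b}), whereas you expand the closed form (\ref{eq:l27}) by the generalized binomial theorem, which is just the chain (\ref{eq:l28c})--(\ref{eq:l28e}) run in reverse; your remark on analyticity in the disc of radius $\min(\lambda_{+},\lambda_{-})$ supplies a justification the paper leaves implicit.
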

\begin{proof} \ \\
\noindent
we consider the characteristic exponent $\Psi(\xi)$  in (\ref{eq:l28b}).
 \begin{align}
\Psi(\xi)&= i\mu\xi  + \int_{0}^{+\infty} \left(e^{iy\xi} -1\right)\frac{\alpha_{+}e^{-\lambda_{+}y}}{y^{1+\beta_{+}}}dy + \int_{0}^{+\infty} \left(e^{-iy\xi} -1\right)\frac{\alpha_{-}e^{\lambda_{-}y}}{y^{1+\beta_{-}}}dy \label {eq:l12a}\\
&= i\mu\xi + \alpha_{+}\sum_{k=1}^{+\infty}{\frac{\Gamma(k-\beta_{+})}{\lambda_{+}^{k-\beta_{+}}}\frac{(i\xi)^{k}}{k!}} + \alpha_{-}\sum_{k=1}^{+\infty}{\frac{\Gamma(k-\beta_{-})}{\lambda_{-}^{k-\beta_{-}}}\frac{(-i\xi)^{k}}{k!}}\label {eq:l12b}\\
&= i\mu\xi + \sum_{k=1}^{+\infty}{\frac{1}{k!}\left(\alpha_{+}\frac{\Gamma(k-\beta_{+})}{\lambda_{+}^{k-\beta_{+}}} + \alpha_{-}\frac{\Gamma(k-\beta_{-})}{\lambda_{-}^{k-\beta_{-}}}(-1)^{k}\right)(i\xi)^{k}}=\sum_{k=0}^{+\infty}\frac{\kappa_{k}}{k!}(i\xi)^{k} \label {eq:l12d}
\end{align}
\noindent
Hence, the $k$-th order cumulant $\kappa_{k}$  is given by comparing the coefficients of both polynomial functions in (\ref{eq:l12d}) in  $i\xi$.
\end{proof}

\begin{corollary} \ \\
Let $Y=\left(Y_{t}\right)$ be a L\'evy process on $\mathbb{R^{+}}$ generated by $GTS(\textbf{$\mu$}, \textbf{$\beta_{+}$}, \textbf{$\beta_{-}$}, \textbf{$\alpha_{+}$},\textbf{$\alpha_{-}$}, \textbf{$\lambda_{+}$}, \textbf{$\lambda_{-}$})$. \\
$\forall t \in \R^{+}$, $Y_{t}$  follows a $GTS(t\textbf{$\mu$}, \textbf{$\beta_{+}$}, \textbf{$\beta_{-}$}, \textbf{$t\alpha_{+}$},\textbf{$t\alpha_{-}$}, \textbf{$\lambda_{+}$}, \textbf{$\lambda_{-}$})$
\end{corollary}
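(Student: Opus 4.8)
The plan is to obtain the claim directly from the closed-form characteristic exponent of Theorem~\ref{lem5} together with the defining property of a L\'evy process, namely that its characteristic function is an exponential in the time parameter.

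First I would recall that if $Y=(Y_{t})_{t\ge 0}$ is the L\'evy process generated by $GTS(\mu,\beta_{+},\beta_{-},\alpha_{+},\alpha_{-},\lambda_{+},\lambda_{-})$, then stationarity and independence of increments give, for every $t\in\R^{+}$,
\begin{align*}
E\!\left(e^{i\xi Y_{t}}\right)=e^{t\Psi(\xi)},
\end{align*}
where $\Psi$ is the characteristic exponent in (\ref{eq:l27}). Equivalently, $\log E(e^{i\xi Y_{t}})$ is additive in $t$: on integer times this is immediate from $Y_{t}$ being a sum of $t$ i.i.d.\ copies of $Y_{1}$, and the general case follows by the usual density and right-continuity argument.

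Next I would multiply (\ref{eq:l27}) by $t$ and regroup the three terms:
\begin{align*}
t\Psi(\xi)=(t\mu)\xi i + (t\alpha_{+})\Gamma(-\beta_{+})\left((\lambda_{+}-i\xi)^{\beta_{+}}-\lambda_{+}^{\beta_{+}}\right) + (t\alpha_{-})\Gamma(-\beta_{-})\left((\lambda_{-}+i\xi)^{\beta_{-}}-\lambda_{-}^{\beta_{-}}\right).
\end{align*}
The key observation is that $\Psi$ is linear in $\mu$ and in the scale parameters $\alpha_{+},\alpha_{-}$, whereas $\beta_{\pm}$ and $\lambda_{\pm}$ enter only through a fixed functional shape; hence $t\Psi$ is again of the exact form (\ref{eq:l27}), but with $\mu$ replaced by $t\mu$ and $\alpha_{\pm}$ replaced by $t\alpha_{\pm}$. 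I would also note that the admissibility constraints $0\le\beta_{\pm}\le 1$, $\alpha_{\pm}\ge 0$, $\lambda_{\pm}\ge 0$ are preserved, since $t\ge 0$ forces $t\alpha_{\pm}\ge 0$ and leaves $\beta_{\pm},\lambda_{\pm}$ untouched.

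Finally, by uniqueness of characteristic functions, any random variable with characteristic function $e^{t\Psi(\xi)}$ must be $GTS(t\mu,\beta_{+},\beta_{-},t\alpha_{+},t\alpha_{-},\lambda_{+},\lambda_{-})$-distributed, which is exactly the assertion for $Y_{t}$. I do not anticipate a genuine obstacle: the argument is essentially a bookkeeping check of how the parameters enter $\Psi$. The only minor points worth a sentence are the degenerate value $t=0$ (where $Y_{0}=0$ a.s., consistent with the limiting parameters $t\mu=0$, $t\alpha_{\pm}=0$) and, for precision, the remark that ``$Y$ on $\R^{+}$'' should be read via the decomposition $Y_{t}=\mu t + X_{+}(t)-X_{-}(t)$ into a drift and two subordinators, so that the representation of Theorem~\ref{lem5} applies verbatim to each $Y_{t}$.
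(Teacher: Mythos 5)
Your argument is correct and is essentially the paper's own proof: both start from $\log E\!\left(e^{i\xi Y_{t}}\right)=t\,\Psi(\xi)$, distribute the factor $t$ across the three terms of the characteristic exponent in (\ref{eq:l27}), and read off the new parameters $t\mu$, $t\alpha_{+}$, $t\alpha_{-}$ by linearity, concluding via uniqueness of characteristic functions. Your added remarks on the preservation of the admissibility constraints and the degenerate case $t=0$ are harmless refinements of the same route.
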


\begin{proof}[Proof:]  \ \\
\noindent
Let $\varphi(\xi,t)$ be the characteristic exponent  of the L\'evy process  $Y=\left(Y_{t}\right)$. We have 
 \begin{align*}
\Psi(\xi,t)&= Log\left(Ee^{i Y_{t}\xi}\right)= t Log\left(Ee^{i X\xi}\right)\\
&=t\mu\xi i + t\alpha_{+}\Gamma(-\beta_{+})\left( (\lambda_{+} - i\xi)^{\beta_{+}} - \lambda_{+}^{\beta_{+}}\right) + t\alpha_{-}\Gamma(-\beta_{-})\left( (\lambda_{-} + i\xi)^{\beta_{-}} - \lambda_{-}^{\beta_{-}}\right) 
   \end{align*}
We have:  $Y_{t} \sim GTS(t\textbf{$\mu$}, \textbf{$\beta_{+}$}, \textbf{$\beta_{-}$}, \textbf{$t\alpha_{+}$},\textbf{$t\alpha_{-}$}, \textbf{$\lambda_{+}$}, \textbf{$\lambda_{-}$})$

\end{proof} 
 \begin{theorem}\label{lem7} (Asymptotic distribution of Generalized Tempered Stable distribution  process)\\
Let $Y={Y_{t}}$ be a L\'evy process on $\mathbb{R}$ generated by  GTS(\textbf{$\mu$}, \textbf{$\beta_{+}$}, \textbf{$\beta_{-}$}, \textbf{$\alpha_{+}$},\textbf{$\alpha_{-}$}, \textbf{$\lambda_{+}$}, \textbf{$\lambda_{-}$}).
Then  $Y_{t}$  converges in distribution to a L\'evy process driving by a Normal distribution with mean $\kappa_{1}$ and variance $\kappa_{2}$
\begin{align*}
Y_{t}\overset{\text{d}}{\sim}N(t\kappa_{1},t\kappa_{2}) \quad &\text{as} \quad t \to +\infty
 \end{align*}
\end{theorem}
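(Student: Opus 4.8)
The plan is to work directly with the characteristic function of the centered and rescaled process and show it converges pointwise to that of a Gaussian, then invoke Lévy's continuity theorem. Concretely, write $Z_t = (Y_t - t\kappa_1)/\sqrt{t\kappa_2}$; I would compute its characteristic exponent from the Corollary above, which gives $Y_t \sim GTS(t\mu,\beta_+,\beta_-,t\alpha_+,t\alpha_-,\lambda_+,\lambda_-)$, so that $\log E e^{\texti\xi Y_t} = t\Psi(\xi)$ with $\Psi$ as in \eqref{eq:l27}. Hence $\log E e^{\texti\xi Z_t} = -\texti\xi\sqrt{t/\kappa_2}\,\kappa_1 + t\,\Psi\!\big(\xi/\sqrt{t\kappa_2}\big)$, and the whole matter reduces to a Taylor expansion of $t\,\Psi(\xi/\sqrt{t\kappa_2})$ as $t\to\infty$.

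The key computation is the small-argument expansion of $\Psi$. Using the cumulant expansion \eqref{eq:l12d}, namely $\Psi(\eta) = \sum_{k\ge 1}\frac{\kappa_k}{k!}(\texti\eta)^k$, substitute $\eta = \xi/\sqrt{t\kappa_2}$ to get
\begin{align*}
t\,\Psi\!\Big(\frac{\xi}{\sqrt{t\kappa_2}}\Big) = t\sum_{k=1}^{\infty}\frac{\kappa_k}{k!}\Big(\frac{\texti\xi}{\sqrt{t\kappa_2}}\Big)^k = \texti\xi\sqrt{\frac{t}{\kappa_2}}\,\kappa_1 - \frac{\xi^2}{2} + \sum_{k=3}^{\infty}\frac{\kappa_k}{k!}\,\frac{(\texti\xi)^k}{\kappa_2^{k/2}}\,t^{1-k/2}.
\end{align*}
The $k=1$ term cancels exactly against the centering term $-\texti\xi\sqrt{t/\kappa_2}\,\kappa_1$, the $k=2$ term produces the desired $-\xi^2/2$ since $\kappa_2/\kappa_2 = 1$, and every term with $k\ge 3$ carries a factor $t^{1-k/2}\to 0$. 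Therefore $\log E e^{\texti\xi Z_t}\to -\xi^2/2$ for each fixed $\xi$, which is the characteristic exponent of the standard normal; Lévy's continuity theorem then gives $Z_t\overset{\text d}{\to}N(0,1)$, equivalently $Y_t\overset{\text d}{\sim}N(t\kappa_1,t\kappa_2)$ as $t\to+\infty$.

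The main obstacle is making the term-by-term passage to the limit rigorous: the cumulant series for $\Psi$ is only convergent for $|\eta| < \min(\lambda_+,\lambda_-)$, so for fixed $\xi$ one must note that $\xi/\sqrt{t\kappa_2}$ eventually lies in that radius and then justify interchanging the limit with the tail sum. I would handle this by splitting off the first two terms and bounding the remainder $\big|\sum_{k\ge 3}\frac{\kappa_k}{k!}\frac{(\texti\xi)^k}{\kappa_2^{k/2}}t^{1-k/2}\big|$ by $t^{-1/2}$ times a convergent series (using $|\kappa_k|/k! \le C\,r^{-k}$ for any $r<\min(\lambda_+,\lambda_-)$, which follows from the closed form of $\kappa_k$ in \eqref{eq:l12} and Stirling), so the remainder is $O(t^{-1/2})\to 0$. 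An equivalent and perhaps cleaner route avoids the series entirely: differentiate the closed-form $\Psi$ in \eqref{eq:l27} twice, check $\Psi(0)=0$, $\Psi'(0)=\texti\kappa_1$, $\Psi''(0)=-\kappa_2$, and apply Taylor's theorem with remainder $\Psi(\eta) = \texti\kappa_1\eta - \tfrac12\kappa_2\eta^2 + o(\eta^2)$ as $\eta\to 0$; multiplying by $t$ and setting $\eta=\xi/\sqrt{t\kappa_2}$ gives the same conclusion with the error term $t\cdot o(1/t)\to 0$.
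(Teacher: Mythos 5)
Your proposal follows essentially the same route as the paper: rescale to $(Y_t - t\kappa_1)/\sqrt{t\kappa_2}$, expand $t\Psi(\xi/\sqrt{t\kappa_2})$ via the cumulant series \eqref{eq:l12d}, cancel the $k=1$ term against the centering, keep $-\xi^2/2$ from $k=2$, and let the $k\ge 3$ tail vanish as $t\to\infty$. The only difference is that you explicitly justify the vanishing of the tail sum (convergence radius, uniform bound on $|\kappa_k|/k!$, or the Taylor-with-remainder alternative), a step the paper simply asserts in \eqref{eq:l555a}; your version is the more careful one.
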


\begin{proof}[Proof:]  \ \\
\noindent
It was shown in\cite{kendall1946advanced} that $\Psi(\xi)$ generates the cumulants $\left(\kappa_{j}\right)_{j\in \N}$ in (\ref{eq:l12}) such that 
 \begin{align}
\Psi(\xi)=Log\left(Ee^{i Y\xi}\right)= \sum_{j=0}^{+\infty}\kappa_{j}\frac{(i\xi)^{j}}{j!}
 \end{align}

$\phi(\xi,t)$ is the characteristic function of the stochastic  process $\frac{{Y_{t}}-t\kappa_{1}}{\sqrt{t\kappa_{2}}}$ and we have 
 \begin{align*}
\phi^{T}(\xi,t)= E\{e^{i \frac{{Y_{t}}-t\kappa_{1}}{\sqrt{t\kappa_{2}}}\xi}\} &=e^{-i \frac{t\kappa_{1}}{\sqrt{t\kappa_{2}}}\xi}E\{e^{i \frac{\xi}{\sqrt{t\kappa_{2}}}Y_{t}}\}=e^{-i \frac{t\kappa_{1}}{\sqrt{t\kappa_{2}}}\xi}e^{t\Psi(\frac{\xi}{\sqrt{t\kappa_{2}}})}\\
&=e^{-i \frac{t\kappa_{1}}{\sqrt{t\kappa_{2}}}\xi}e^{\sum_{j=0}^{+\infty}\frac{t\kappa_{j}}{j!}(i\frac{\xi}{\sqrt{t\kappa_{2}}})^{j}}=e^{-\frac{\xi^{2}}{2} + \sum_{j=3}^{+\infty}\frac{t\kappa_{j}}{j!}(i\frac{\xi}{\sqrt{t\kappa_{2}}})^{j}}
   \end{align*}
When $t \to +\infty$
 \begin{align}
\lim_{t \to +\infty} \sum_{j=3}^{+\infty}\frac{t\kappa_{j}}{j!}(i\frac{\xi}{\sqrt{t\kappa_{2}}})^{j}=0 \quad \quad  
\lim_{t \to +\infty}\phi^{T}(\xi,t)=\lim_{t \to +\infty}e^{-\frac{\xi^{2}}{2} + \sum_{j=3}^{+\infty}\frac{t\kappa_{j}}{j!}(i\frac{\xi}{\sqrt{t\kappa_{2}}})^{j}} =e^{-\frac{1}{2}\xi^2} \label{eq:l555a}
\end{align}
\end{proof}
\section{Data and Methodology}
\subsection{Data Summaries}
 \noindent 
The Standard \& Poor’s 500 Composite Stock Price Index, also known as the S\&P 500, is a stock index that tracks the share prices of 500 of the largest public companies in the United States. It is often treated as a proxy for describing the overall health of the stock market or even the U.S. economy. The SPDR S\&P 500   ETF (SPY), also known as the SPY ETF,  is an Exchange-Traded Fund (ETF)that tracks the performance of the S\&P 500. Like S\&P, SPY provides the diversification of a mutual fund, the flexibility of a stock and lower trading fees.\\
The S\&P 500 and SPY ETF data were extracted from Yahoo finance. The daily prices were adjusted for splits and dividends. The period spans from January 4, 2010 to February 02, 2022. The Bitcoin BTC price was extracted from CoinMarketCap. The period spans from April 28, 2013 to April 08, 2022. The daily price dynamics are provided in Fig \ref{fig66}, Fig \ref{fig67} and Fig \ref{fig68} . Prices have an increasing trend, but the scale levels are different; S\&P 500 and Bitcoin are priced in thousand US dollars and SPY ETF in hundred US dollars. The increasing trend is temporally disrupted in the first quarter of 2020 by the coronavirus pandemic.\\
\\
\begin{figure}[ht]
\vspace{-0.5cm}
    \centering
  \begin{subfigure}[b]{0.3\linewidth}
    \includegraphics[width=\linewidth]{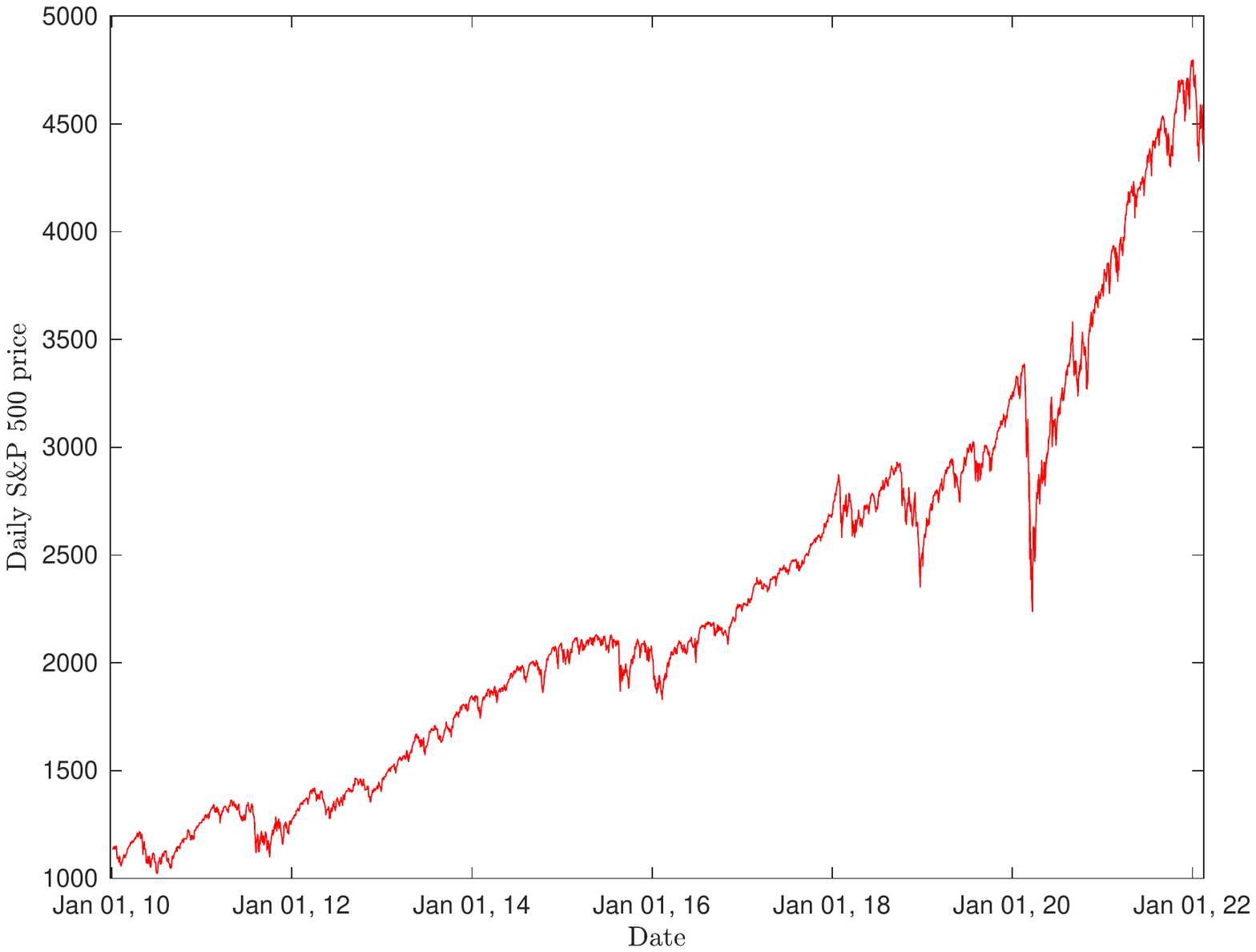}
\vspace{-0.5cm}
     \caption{Daily price of the S\&P 500}
         \label{fig66}
  \end{subfigure}
  \begin{subfigure}[b]{0.3\linewidth}
    \includegraphics[width=\linewidth]{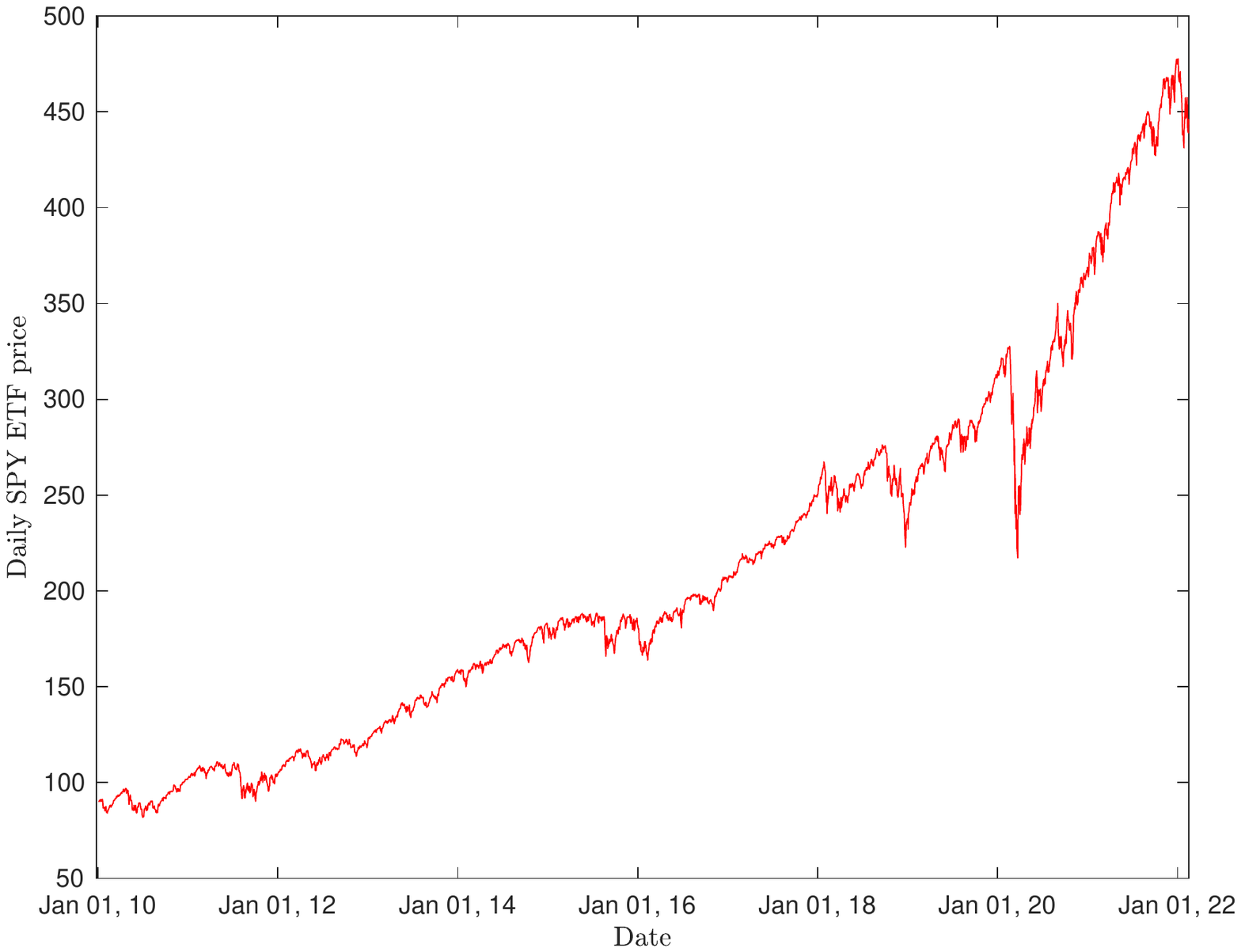}
\vspace{-0.5cm}
     \caption{Daily price of SPY ETF}
         \label{fig67}
          \end{subfigure}
  \begin{subfigure}[b]{0.3\linewidth}
    \includegraphics[width=\linewidth]{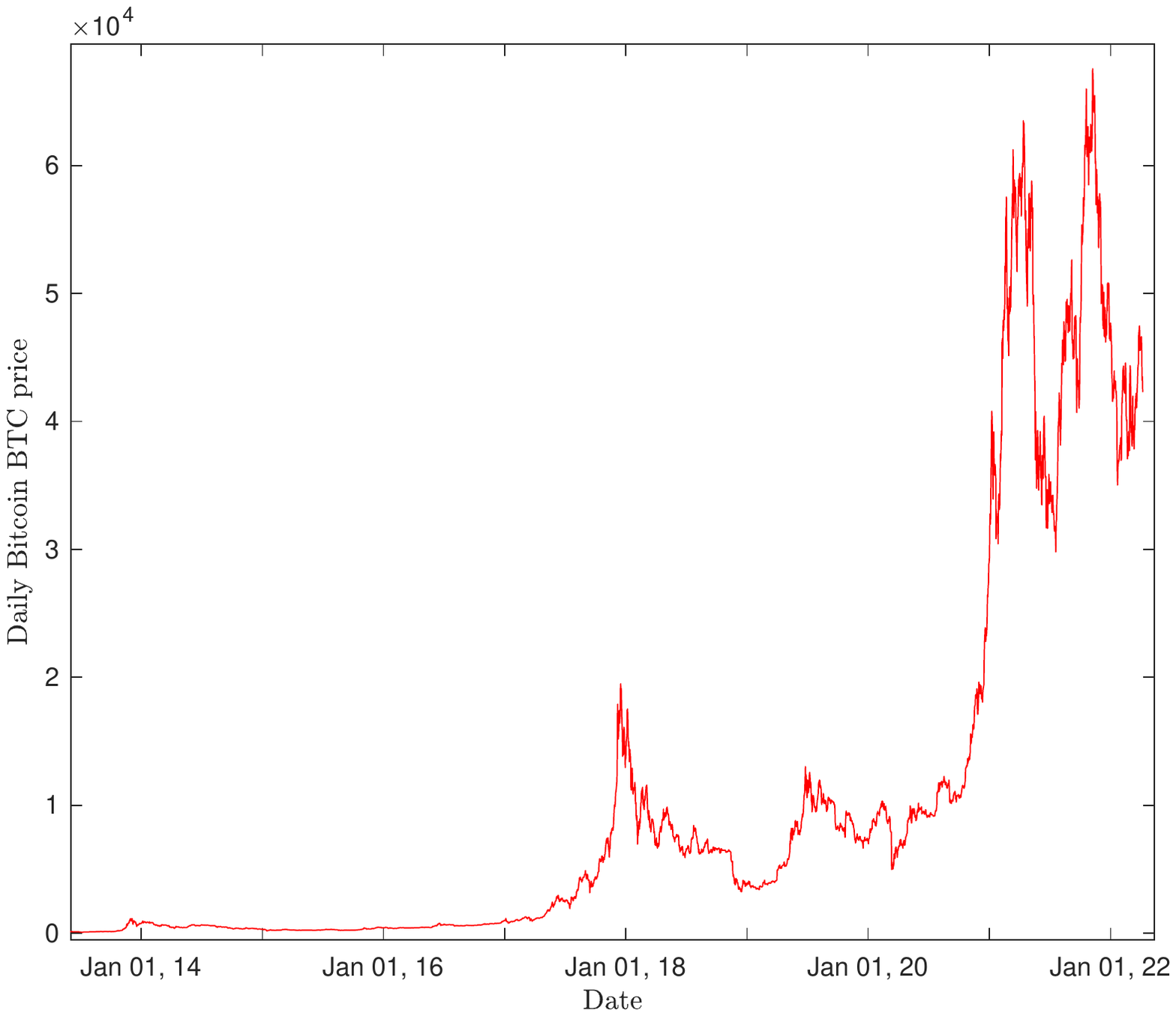}
\vspace{-0.5cm}
     \caption{Daily price of Bitcoin BTC}
         \label{fig68}
          \end{subfigure}
\vspace{-0.7cm}
  \caption{Daily Index Price}
  \label{fig66a}
\vspace{-0.5cm}
\end{figure}

\noindent
Let the number of observations $m$, and the daily observed price $S_{j}$ on day $t_{j}$ with $j=1,\dots,m$; $t_{1}$ is the first observation date (January 04, 2010) and $t_{m}$ is the last observation date (February 02, 2022). The daily return $(y_{j})$ is computed as in (\ref{eq:l31}).
\begin{align}
y_{j}=\log(S_{j}/S_{j-1}) \hspace{10 mm}  \hbox{ $j=2,\dots,m$}\label{eq:l31}
 \end{align}

\begin{figure}[ht]
\vspace{-0.5cm}
    \centering
  \begin{subfigure}[b]{0.31\linewidth}
    \includegraphics[width=\linewidth]{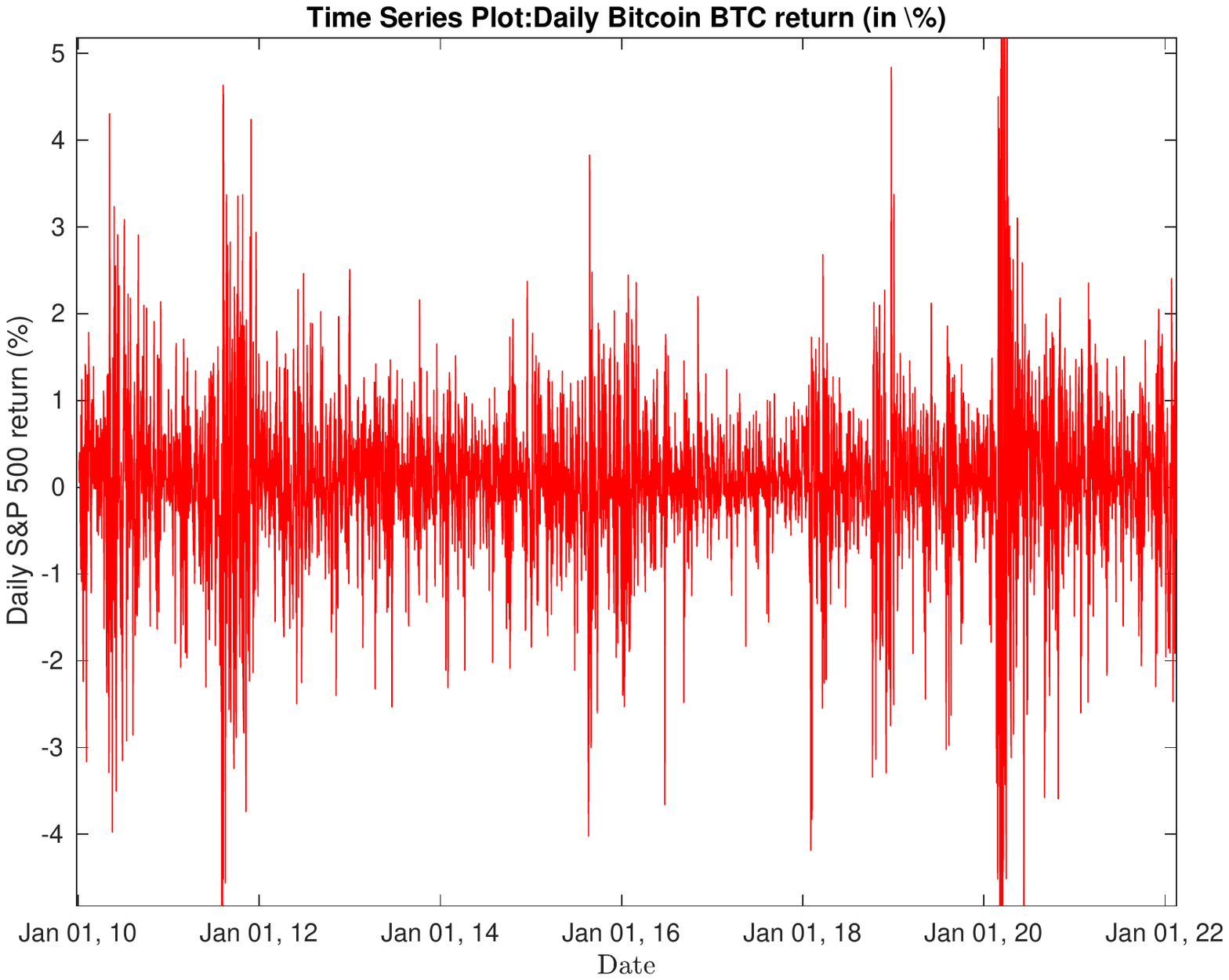}
\vspace{-0.5cm}
     \caption{Daily SP500 return}
         \label{fig69}
  \end{subfigure}
  \begin{subfigure}[b]{0.31\linewidth}
    \includegraphics[width=\linewidth]{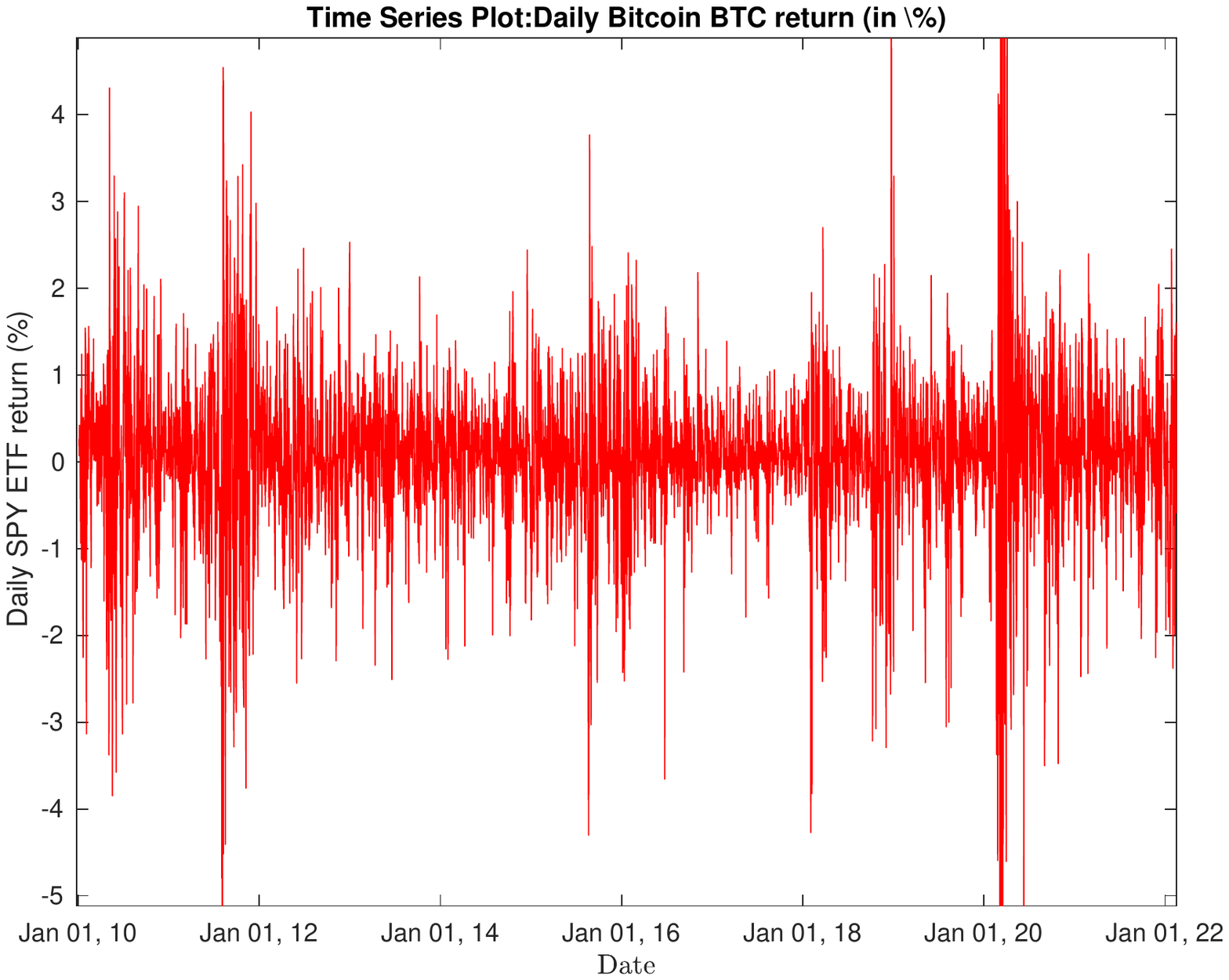}
\vspace{-0.5cm}
     \caption{ Daily SPY ETF return }
         \label{fig70}
          \end{subfigure}
 \begin{subfigure}[b]{0.31\linewidth}
    \includegraphics[width=\linewidth]{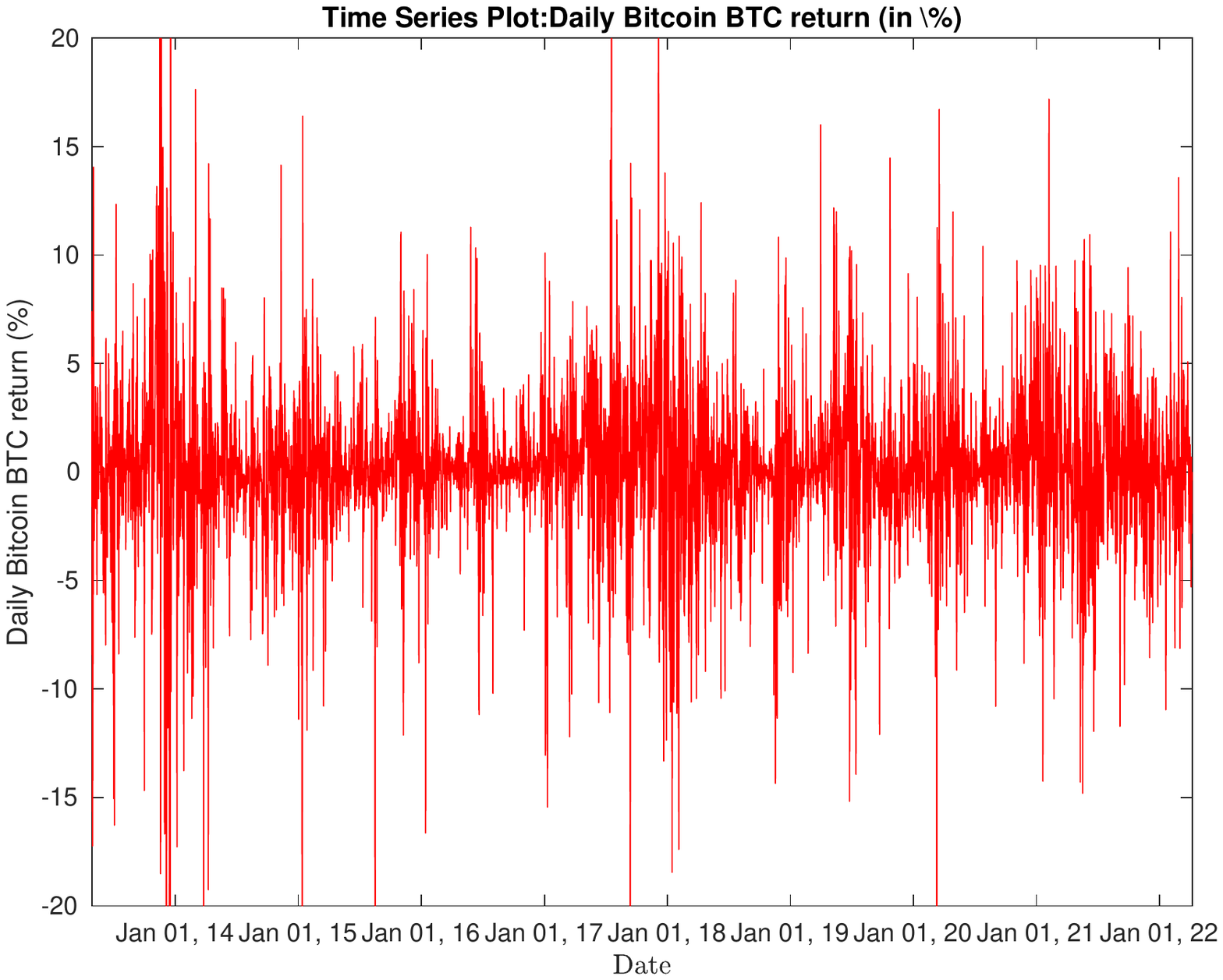}
\vspace{-0.5cm}
     \caption{ Daily Bitcoin BTC return }
         \label{fig71}
          \end{subfigure}
\vspace{-0.7cm}
  \caption{Daily SPY ETF Return ($\%$)}
  \label{fig66b}
\vspace{-0.5cm}
\end{figure}

\noindent  
As shown in Fig \ref{fig66b}, The volatility of each daily return is higher in the First quarter of $2020$ amid the coronavirus pandemic and massive disruptions in the global economy. SPY ETF aims to mirror the performance of the S\&P 500. Fig \ref{fig69} and Fig \ref{fig70} look similar, which is consistent with the goal of SPY ETF.\\
\noindent 
Some daily return observations were identified as outliers and removed from the data set to avoid a negative impact on statistics.  By assuming the independent, identically distribution of each return in the sample, the method of moments was used to compute statistics based on data without outliers. The data were summarized and provided statistical information in Table \ref{tab:1}.
\begin{table}[ht]
\vspace{-0.5cm}
 \caption{Summary Statistics  }
\label{tab:1}
\centering
\begin{tabular}{@{}l c c c c c@{}}
\toprule
 & \multicolumn{3}{c}{Empirical Statistics} \\\toprule
Label & S\&P 500 return & SPY ETF  return & Bitcoin return\\ \toprule
Sample size (m) & 3046 & 3046 & 3264\\
Mean  & 0.0501531 & 0.0546935 & 0.1790993\\
Variance & 0.9465432 & 0.9349634 & 16.2308643 \\
Skewness & -0.3627460 & -0.4587793 & -0.3368540  \\
Kurtosis &  6.4474002 & 6.4670659 &  7.9194813\\ \bottomrule
\end{tabular}%
\end{table}

\subsection{Review of the Maximum Likelihood Method}
\noindent
From a probability density function $f(x,V)$ with parameter $V=(\textbf{$\mu$}, \textbf{$\beta_{+}$}, \textbf{$\beta_{-}$}, \textbf{$\alpha_{+}$},\textbf{$\alpha_{-}$}, \textbf{$\lambda_{+}$}, \textbf{$\lambda_{-}$})$ of size $p=7$ and the sample data $x$ of size $m$,  we definite  the Likelihood Function and its derivatives as follows:  \begin{align}
 L(x,V) &= \prod_{j=1}^{m} f(x_{j},V) \quad & \quad 
 l(x,V) &= \sum_{j=1}^{m} log(f(x_{j},V))  \label {eq:l32} \\
 \frac{dl(x,V)}{dV_j} &= \sum_{i=1}^{m} \frac{\frac{df(x_{i},V)}{dV_j}}{f(x_{i},V)} \quad & \quad
  \frac{d^{2}l(x,V)}{dV_{k}dV_{j}} &= \sum_{i=1}^{m} \left(\frac{\frac{d^{2}f(x_{i},V)}{dV_{k}dV_{j}}}{f(x_{i},V)}- \frac{\frac{df(x_{i},V)}{dV_{k}}}{f(x_{i},V)}\frac{\frac{df(x_{i},V)}{dV_j}}{f(x_{i},V)}\right) \label {eq:l35}
 \end{align}
\noindent
In order to perform the Maximum of the likelihood function  (\ref{eq:l32}), the quantities  $\frac{dl(x,V)}{dV_j}$ and $\frac{d^{2}l(x,V)}{dV_{k}dV_{j}}$, which are the first and second order derivative of the density function respectively, must be computed. 
\noindent
The quantities $\frac{d^{2}l(x,V)}{dV_{k}dV_{j}}$ in (\ref{eq:l35}) are critical in computing the Hessian Matrix and the Fisher Information Matrix.\\
\noindent
Given the parameters $V=(\textbf{$\mu$}, \textbf{$\beta_{+}$}, \textbf{$\beta_{-}$}, \textbf{$\alpha_{+}$},\textbf{$\alpha_{-}$}, \textbf{$\lambda_{+}$}, \textbf{$\lambda_{-}$})$ and the sample data set $X$, we have the following quantities (\ref{eq:l36}) from the previous development and computations
  \begin{align}
 I'(X,V) =\left(\frac{dl(x,V)}{dV_j}\right)_{1 \leq j \leq p}   \quad  \quad  I''(X,V) = \left(\frac{d^{2}l(x,V)}{dV_{k}dV_{j}}\right)_{\substack{{1 \leq k \leq p} \\ {1 \leq j \leq p}}} \label {eq:l36}
 \end{align}
\noindent
 The Fractional Fourier Transform (FRFT) technique computes the likelihood function (\ref{eq:l32}) and its derivatives (\ref{eq:l36}) in the optimization process. The numerical integration technique, called the 12-point rule Composite Newton-Cotes Quadrature Formulas\cite{Nzokem_2021} can also be used. See \cite{Nzokem_2022,nzokem2021fitting} for a short review of the FRFT technique and the choice of FRFT parameters.\\
\noindent
we should have a local solution $v$ and a negative semi-definite matrix ($I''(x,V)$) when
  \begin{align}
 I'(x,V)=0 \quad \quad U^{T}\mathbf{I''(X,V)}U \leq 0\hspace{5mm}  \hbox{ $\forall U \in \R^{p}$}\label{eq:l38}
 \end{align}
The solutions of (\ref{eq:l38}) is provided by the Newton--Raphson Iteration Algorithm (\ref{eq:l39}). 
  \begin{align}
V^{n+1}=V^{n}-{\left(I''(x,V^{n})\right)^{-1}}I'(x,V^{n})\label{eq:l39}
 \end{align}
\noindent
For more detail on Maximum likelihood and Newton–Raphson Iteration procedure, see \cite{giudici2013wiley}.


\section{Fitting Tempered Stable distribution results}
\noindent 
 The Fractional Fourier Transform (FRFT) technique computes the likelihood function and its derivatives in the optimization process for a given asset return data. See \cite{Nzokem_2022,nzokem2021fitting} for a short review of the FRFT technique and the choice of FRFT parameters.\\
 The results of the GTS Parameter Estimation are summarised in Table \ref{tab:2}. It appears that the scale parameter $\beta$, $\alpha$, and $\lambda$ are all positive as expected; except $\beta_{-}$ for S\&P 500 return data, which is negative and suggests that the left side return variable ($X_{-}$) can be modelled by a compound Poisson distribution.
 
 \begin{table}[ht]
\vspace{-0.3cm}
\caption{Maximum Likelihood GTS Parameter Estimations }
\vspace{-0.3cm}
 \label{tab:2}
\centering
\begin{tabular}{@{}l l c c c @{}}
\toprule
\multirow{2}{*}{} &
\multirow{2}{*}{}&
\multicolumn{2}{c}{\textbf{Parameter Estimations per Index}}\\
\multirow{2}{*}{\textbf{Model}} &
\multirow{2}{*}{\textbf{Parameters}}&
\multirow{2}{*}{\textbf{500 SPY ETF return}}&
\multirow{2}{*}{\textbf{SPY ETF return}}& 
\multirow{2}{*}{\textbf{Bitcoin BTC return}}\\ 
\multirow{2}{*}{} &
\multirow{2}{*}{} &
\multirow{2}{*}{} &
\multirow{2}{*}{} &  
\multirow{2}{*}{}  \\ \toprule
\multirow{2}{*}{\textbf{GTS}} &
  \multirow{2}{*}{\textbf{$\mu$}} &
  \multirow{2}{*}{-0.5274011} &
  \multirow{2}{*}{-0.4145983} &
  \multirow{2}{*}{0.0284876*} \\
 \multirow{2}{*}{} &
  \multirow{2}{*}{\textbf{$\beta_{+}$}} &
  \multirow{2}{*}{0.5174702} &
  \multirow{2}{*}{0.5235145}&
  \multirow{2}{*}{-0.2560435}\\
  
  \multirow{2}{*}{} &
  \multirow{2}{*}{\textbf{$\beta_{-}$}} &
  \multirow{2}{*}{-0.0888191 } &
  \multirow{2}{*}{0.1531474} &
  \multirow{2}{*}{0.3863913}\\
  
  \multirow{2}{*}{} &
  \multirow{2}{*}{\textbf{$\alpha_{+}$}} &
  \multirow{2}{*}{0.6735391} &
  \multirow{2}{*}{0.6365290}&
   \multirow{2}{*}{1.2868131}\\
  
  \multirow{2}{*}{} &
  \multirow{2}{*}{\textbf{$\alpha_{-}$}} &
  \multirow{2}{*}{0.6083026} &
  \multirow{2}{*}{0.5118005}&
  \multirow{2}{*}{0.2771887}\\

  \multirow{2}{*}{} &
  \multirow{2}{*}{\textbf{$\lambda_{+}$}}&
  \multirow{2}{*}{1.2665066} &
  \multirow{2}{*}{1.2407793} &
  \multirow{2}{*}{3.7929526} \\
  
  \multirow{2}{*}{} &
  \multirow{2}{*}{\textbf{$\lambda_{-}$}}&
  \multirow{2}{*}{1.0807322} &
  \multirow{2}{*}{0.9354772} &
  \multirow{2}{*}{1.9676313} \\   
     
  \multirow{2}{*}{} &
  \multirow{2}{*}{} &
  \multirow{2}{*}{} &
  \multirow{2}{*}{} & 
  \multirow{2}{*}{} \\ 

\multirow{2}{*}{\textbf{GBM}} &
\multirow{2}{*}{\textbf{$\mu$}} &
\multirow{2}{*}{0.0501531} &
\multirow{2}{*}{0.0546935} &
\multirow{2}{*}{0.1790993} \\
    
  \multirow{2}{*}{} &
  \multirow{2}{*}{\textbf{$\sigma$}} &
  \multirow{2}{*}{0.9729045}& 
  \multirow{2}{*}{0.9669350} &
  \multirow{2}{*}{4.0287547} \\
   
\multirow{2}{*}{} & 
\multirow{2}{*}{} &
\multirow{2}{*}{} &
\multirow{2}{*}{} &
\multirow{2}{*}{} \\\bottomrule
 \end{tabular}
 \vspace{-0.2cm}
\end{table}

\noindent 
More details regarding the Parameter estimations are provided in Appendix \ref{eq:an1} , Table \ref{tab:5} for S\&P 500 return data. The maximization procedure convergences after 26 iterations; The last two columns of the  Table \ref{tab:5} show that the relationships (\ref{eq:l38}) are met. We have similar results in Appendix \ref{eq:an1}, Table \ref{tab:5} for SPY ETF and Appendix \ref{eq:an2}, Table \ref{tab:6} for Bitcoin BTC returns.\\
\noindent
For Bitcoin BTC, the location parameter $\mu*$ in Table \ref{tab:2} was replaced by 0.0665537 in order to align the theoretical and empirical Mean. The estimated parameters in Table \ref{tab:2} was used to compute the cumulants in Table \ref{tab:3}.
\begin{table}[ht]
\vspace{-0.3cm}
 \caption{Summary  Statistics }
\label{tab:3}
\centering
\begin{tabular}{@{}l c  c c c c@{}}
\toprule
 & \multicolumn{3}{c}{GTS(\textbf{$\beta_{+}$},\textbf{$\beta_{-}$},\textbf{$\alpha_{+}$},\textbf{$\alpha_{-}$},\textbf{$\lambda_{+}$},\textbf{$\lambda_{-}$})  Statistics} \\\toprule
Label & S\&P 500 return & SPY ETF  return & Bitcoin return\\ \toprule
Mean ($\kappa_{1}$)  & 0.0414355 & 0.0489189 & 0.1790993 \\
Variance ($\kappa_{2}$) & 0.9586858 & 0.9568384 & 15.5529035\\
Skewness ($\frac{\kappa_{3}}{\kappa_{2}^{3/2}}$) & -0.5843860 & -0.6322466 & -0.4119871 \\
Kurtosis ($3+\frac{\kappa_{4}}{\kappa_{2}^{2}}$) & 7.2851465 & 7.6521787 & 8.2745146 \\ \bottomrule
\end{tabular}%
 \vspace{-0.2cm}
\end{table}

\noindent 
When comparing empirical statistics from the raw data and GTS distribution statistics in Table \ref{tab:1} \& \ref{tab:3}, It appears that The estimation of the Maximum likelihood over-estimates the Variance, Kurtosis and skewness statistics.\\

\section{Kolmogorov-Smirnov (KS) Analysis}
\noindent
Given the sample of daily return $\{y_{1}, y_{2}\dots y_{m}\}$ of size m and the empirical cumulative distribution function (cdf) $F_{m}(x)$ for each index, the Kolmogorov-Smirnov (KS) test is performed under the null hypothesis ($H_{0}$) that the sample $\{y_{1}, y_{2}\dots y_{m}\}$ comes from the Generalized Tempered Sable  (GTS) distribution ($F(x)$). The cumulative distribution function of the theoretical distribution ($F(x)$) needs to be computed.  The density function ($f(x)$) does not have a closed-form, the same for the cumulative function($F(x)$) in (\ref{eq:l52}). However, we know the closed form of the Fourier of the density function ($\scrF[f]$) and the relationship in (\ref{eq:l53}) provides the Fourier of the cumulative distribution function ($\scrF[F]$). The GTS distribution function ($F(x)$) was computed from  the inverse of the Fourier of the cumulative distribution ($\scrF[F]$) in (\ref{eq:l54}). 
  \begin{align}
 Y &\sim GTS(\textbf{$\mu$}, \textbf{$\beta_{+}$}, \textbf{$\beta_{-}$}, \textbf{$\alpha_{+}$},\textbf{$\alpha_{-}$}, \textbf{$\lambda_{+}$}, f\textbf{$\lambda_{-}$}) \label{eq:l51}\\
   F(x)&= \int_{-\infty}^{x} f(t) \mathrm{d}t \hspace{5mm}  \hbox{$f$ is the density function of $Y$} \label{eq:l52}\\
 \scrF[F](x)&=\frac{\scrF[f](x)}{ix} + \pi\scrF[f](0)\delta (x) \label{eq:l53}\\
F(x) &= \frac{1} {2\pi}\int_{-\infty}^{+\infty}\! \frac{\scrF[f](y)}{iy}e^{ixy}\, \mathrm{d}y + \frac{1}{2}\label{eq:l54}
 \end{align}

 \noindent 
See Appendix A in \cite{nzokem2021fitting} for (\ref{eq:l53}) proof.\\
\noindent
The two-sided Kolmogorov-Smirnov goodness-of-fit statistic ($D_{m}$) is defined as follows.
 \begin{align}
D_{m} &= \sup_{x}{|F(x)-F_{m}(x)|}  &
P\_{value} &= prob(D_{m}>d_{m} |H_{0})  \label{eq:l55} 
 \end{align}
\noindent 
where $m$ is the sample size, $F_{m}(x)$ denotes the empirical cumulative distribution of $\{y_{1}, y_{2}\dots y_{m}\}$. The distribution of the Kolmogorov’s goodness-of-fit measure $D_{m}$ has been studied extensively in the litterature \cite{massey1951kolmogorov}. It was shown that the $D_{m}$ distribution is independent of the theoretical distribution ($F(x)$) under the null hypothesis (H0). \\
\begin{figure}[ht]
 \vspace{-0.4cm}
     \centering
         \includegraphics[scale=0.4] {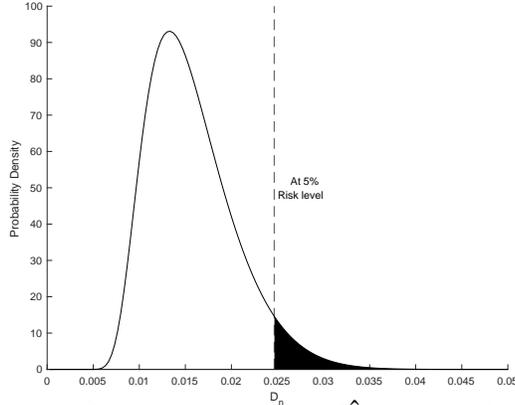}
 \vspace{-0.5cm}
        \caption{Kolmogorov-Smirnov Estimator ($\hat{D_{m}}$) probability density ($m=3048$)  under the null hypothesis $H_{0}$}
        \label{fig62}
 \vspace{-0.7cm}
\end{figure}

\noindent
 More recently, the numerical computation of the exact distribution of $D_{m}$ was developed  in \cite{dimitrova2020computing} along with the R package (KSgeneral). For $m=3048$, the probability density function  of $D_{m}$  was computed under the null hypothesis ($H_{0}$). As shown in Fig \ref{fig62}, the KS estimator ($D_{m}$) is a positively skewed distribution with a mean $\mu=0.0156$ and a standard deviation $\sigma=2.7*10^{-3}$. At $5\%$ risk level, the risk threshold is $d=0.0245$ and represented the area in the shaded area under the probability density function in Fig \ref{fig62}.\\
  \noindent
$d_{m}$ is the value of the KS estimator ($D_{m}$) computed from the sample $\{y_{1}, y_{2}\dots y_{m}\}$. Based on \cite{krysicki1999rachunek}, $d_{m}$  can be estimated as  follows.
\begin{align}
d_{m}&= Max(\sup_{0\leq j\leq P}{|F(x_{j})-F_{m}(x_{j})|}, \sup_{1\leq j\leq P}{|F(x_{j})-F_{m}(x_{j-1})|}) \label{eq:l56}
 \end{align}
For the SPY ETF return data, Appendix \ref{eq:an4} shows the detailed computation of the GTS cumulative distribution (F) and empirical cumulative distribution ($F_{n}$); and the quantity $d_{n}$ can be deduced.
\begin{align*}
\sup_{0\leq j\leq P}{|F(x_{j})-F_{m}(x_{j})|}=max(1)= 0.0141807 &\quad \sup_{1\leq j\leq P}{|F(x_{j})-F_{m}(x_{j-1})|}=max(2)=0.0225265	\\
d_{n}=0.0225265 &\quad P\_{value} = prob(D_{m}>d_{m} |H_{0})=8.95\%
\end{align*}
\noindent 
 For each Index and model, KS-Statistics ($d_{m}$) and P\_values were computed and summarized in Table \ref{tab:4}. See Appendix \ref{eq:an3}, Appendix \ref{eq:an4} and Appendix \ref{eq:an5} for details on the computation of KS-Statistics ($d_{m}$).
\begin{table}[ht]
\vspace{-0.3cm}
\centering
\caption{Kolmogorov-Smirnov (KS) test}
\vspace{-0.3cm}
 \label{tab:4}
\begin{tabular}{@{}lccl@{}}
\toprule
\textbf{Index} &
  \textbf{Model} &
  \textbf{$d_{m}$} &
  \textbf{P\_values} \\ \bottomrule
\textbf{S\&P500}    & \textbf{GTS}  & 0.0231677 & 7.48\% \\
                     & \textbf{GBM}  & 0.0911980 & 0\% \\
\textbf{SPY ETF} & \textbf{GTS} & 0.0225265 & 8.95\% \\
                   & \textbf{GBM}  & 0.0920545 & 0\% \\
\textbf{Bitcoin BTC} & \textbf{GTS} & 0.0264200 & 2.06\% \\
                   & \textbf{GBM}  & 0.1034062 & 0\% \\
\bottomrule
\end{tabular}
\vspace{-0.2cm}
\end{table}

\noindent
The standard Geometric Brownian motion (GBM) does not fit the S\&P 500, SPY ETF and Bitcoin BTC empirical distribution functions $F_{m}(x)$. The $P\_values$ are almost $ 0\%$, and the null hypothesis $H_{0}$ can not be accepted. The Generalized Tempered Stable distribution fits empirical distribution $F_{m}(x)$ at a different level of risk. As shown in Table \ref{tab:4}. In fact, the $P\_value$ is respectively $7.48\%$ for the $S\&P 500$ and  $8.95\%$ for the SPY ETF. These $P\_value$ are higher than the threshold $5\%$, and we can not reject the null hypothesis that the Generalized Tempered Stable distribution fits the empirical distribution. At $2\%$ risk level, the GTS distribution fits the Bitcoin BTC empirical distribution as well.\\
\noindent
The histogram of the daily return for each asset was compared graphically to the GTS and GBM density probability. As shown in Fig \ref{fig80}, GBM in red performs poorly. \\

\begin{figure}[ht]
\vspace{-0.5cm}
    \centering
  \begin{subfigure}[b]{0.31\linewidth}
    \includegraphics[width=\linewidth]{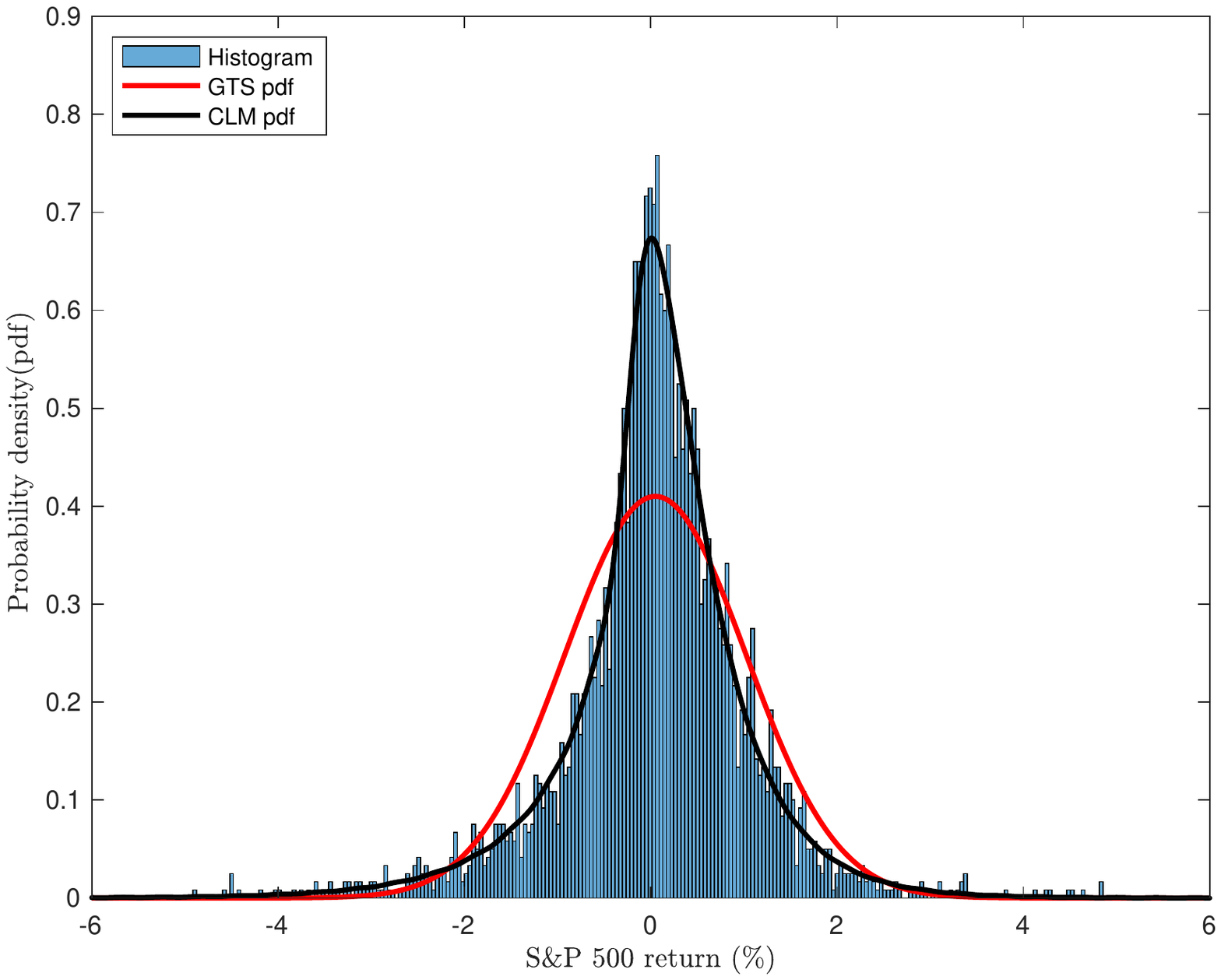}
\vspace{-0.5cm}
     \caption{Daily SP500 return (in \%)}
         \label{fig80}
  \end{subfigure}
  \begin{subfigure}[b]{0.31\linewidth}
    \includegraphics[width=\linewidth]{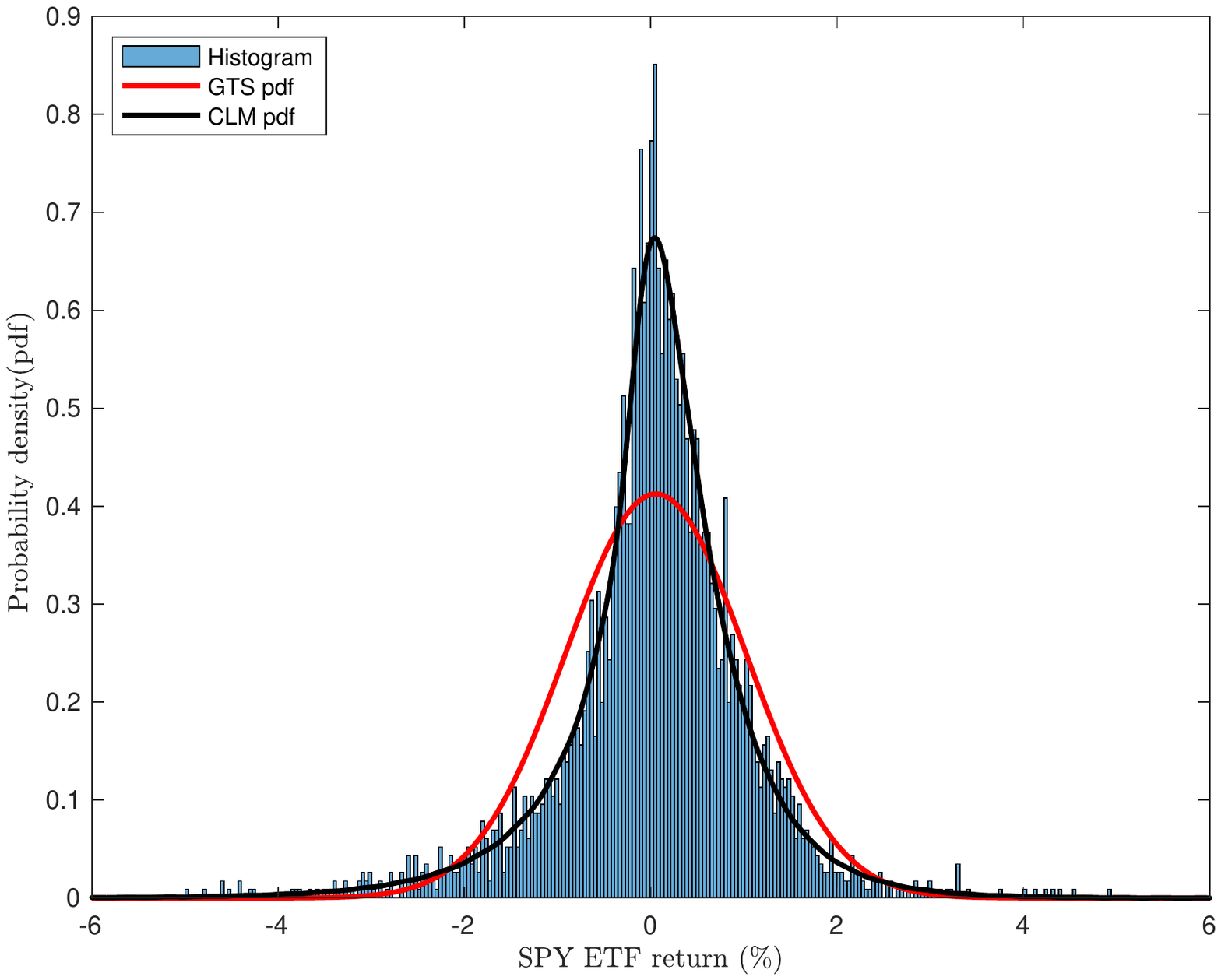}
\vspace{-0.5cm}
     \caption{ Daily SPY ETF return (in \%)}
         \label{fig81}
          \end{subfigure}
\begin{subfigure}[b]{0.31\linewidth}
    \includegraphics[width=\linewidth]{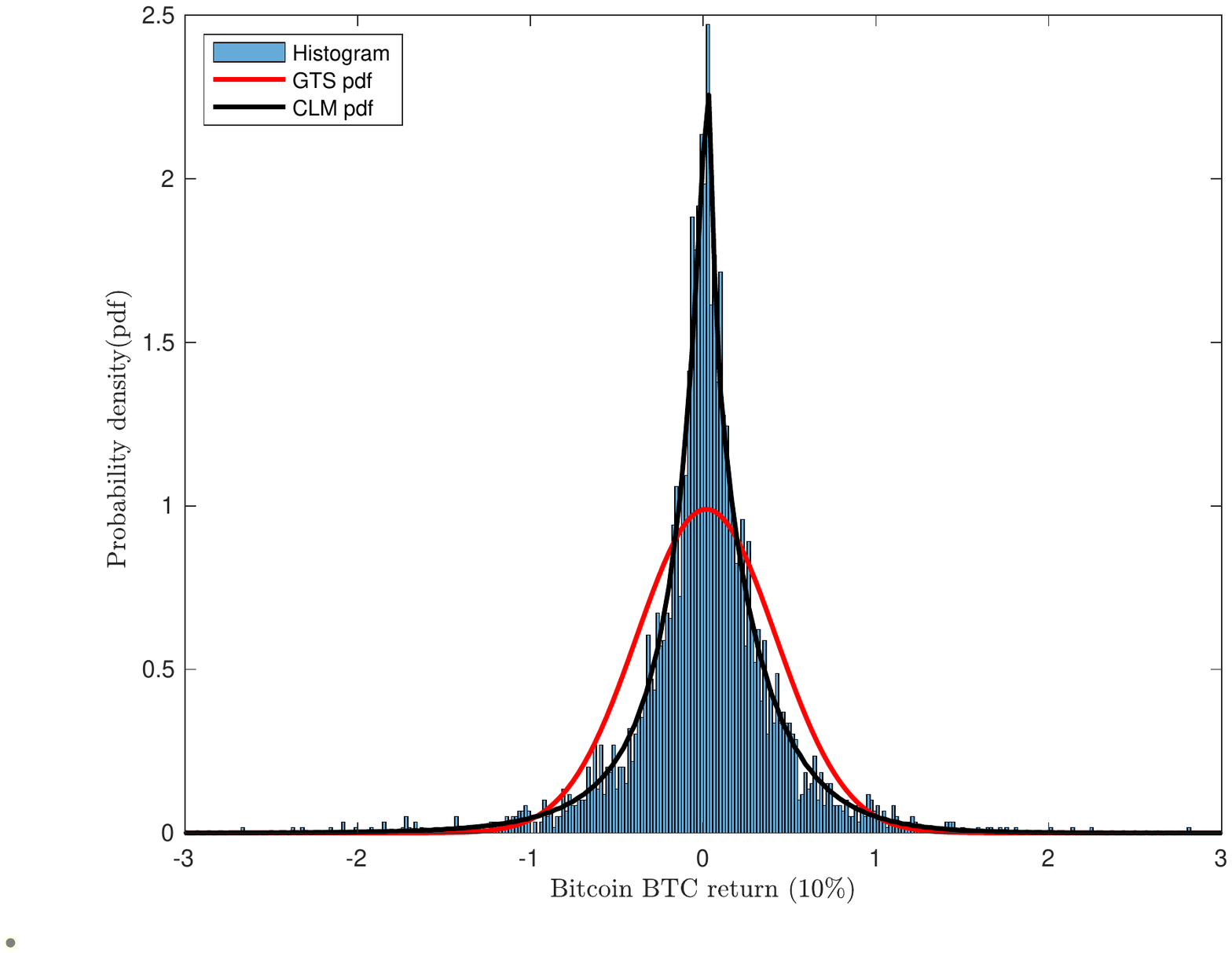}
\vspace{-0.5cm}
     \caption{ Daily Bitcoin return (in 10\%) }
         \label{fig82}
          \end{subfigure}
\vspace{-0.8cm}
  \caption{Daily Return}
  \label{fig80}
\vspace{-0.8cm}
\end{figure}
\section {Conclusion} 
\noindent 
The paper investigates the rich class of GTS distribution, an alternative to Normal distribution and the $\alpha$-Stable distribution for modelling asset return and many physical and economic systems.  We determine the expression of the characteristic exponent of the GTS distribution. We show that the Bilateral Gamma distribution and the Variance-Gamma (VG) distribution are special cases of the GTS distribution when $(\beta_{+},\beta_{-})$ goes to (0,0). Using the characteristic function, we show that the L\'evy process driven by a GTS distribution converges asymptotically to a L\'evy process driven by a normal distribution.
The empirical analysis shows that the GTS distribution fits the underlying distribution of the SPY ETF return. The right side of the Bitcoin BTC return and the left side of the S\&P 500 return underlying distributions fit the Tempered Stable distribution. The compound Poisson process models the left side of the Bitcoin BTC return and the right side of the S\&P 500 return underlying distributions. The Kolmogorov-Smirnov (KS) goodness-of-fit test shows that the GTS distribution fits the empirical distribution of the sample data at different risk levels. 
\section*{Acknowledgement}
\noindent 
We would like to thank an anonymous referee for valuable comments and suggestions which lead to the improvement of this version.
 \newpage
\pagestyle{plain}
\addcontentsline{toc}{chapter}{References}
\bibliographystyle{unsrt}
\bibliography{fmodelArxiv.bib}
\cleardoublepage
\appendix
\renewcommand{\thesection}{\Alph{section}}
\setcounter{section}{0}
\section{GTS(\textbf{$\beta_{+}$},\textbf{$\beta_{-}$},\textbf{$\alpha_{+}$},\textbf{$\alpha_{-}$},\textbf{$\lambda_{+}$},\textbf{$\lambda_{-}$}) Parameters Estimations \\
by Newton – Raphson Iteration Algorithm (\ref{eq:l39})}\label{eq:an1}

\begin{table}[ht]
\vspace{-0.3cm}
\caption{GTS(\textbf{$\beta_{+}$},\textbf{$\beta_{-}$},\textbf{$\alpha_{+}$},\textbf{$\alpha_{-}$},\textbf{$\lambda_{+}$},\textbf{$\lambda_{-}$}) Parameters Estimations for S\&P 500 data}
 \label{tab:5}
 \vspace{-0.3cm}
\centering
\resizebox{\textwidth}{!}{%
\begin{tabular}{|c|c|c|c|c|c|c|c|c|c|c|}
\hline
\textbf{Iterations} & \textbf{$\mu$} & \textbf{$\beta_{+}$} & \textbf{$\beta_{-}$} & \textbf{$\alpha_{+}$} & \textbf{$\alpha_{-}$} & \textbf{$\lambda_{+}$} & \textbf{$\lambda_{-}$} & \textbf{$Log(ML)$} & \textbf{$||\frac{dLog(ML)}{dV}||$} & \textbf{$Max Eigen Value$} \\ \hline
1 & 0 & 0.5 & 0.5 & 0.5 & 0.5 & 1 & 1 & -3971.8672 & 1706.60365 & 5647.82473 \\ \hline
2 & -0.0108585 & 0.49503397 & 0.4724221 & 0.47716693 & 0.38197317 & 1.00168003 & 0.68979404 & -3946.3049 & 175.738369 & -81.56681 \\ \hline
3 & -0.0392162 & 0.47801554 & 0.47596585 & 0.5472256 & 0.40902069 & 1.09483251 & 0.75236288 & -3942.9159 & 99.4792028 & 83.588565 \\ \hline
4 & -0.0481343 & 0.48217364 & 0.49316695 & 0.56805833 & 0.40535734 & 1.11725648 & 0.74911865 & -3942.6834 & 14.5626224 & -4.915284 \\ \hline
5 & -0.3111668 & 0.53220067 & 0.25437508 & 0.54133299 & 0.52174772 & 1.06556933 & 0.947431 & -3949.7853 & 669.849836 & 3470.16212 \\ \hline
6 & -0.3722117 & 0.5353265 & 0.2413161 & 0.5789964 & 0.46109859 & 1.11003293 & 0.85960717 & -3943.5893 & 691.166189 & 3336.50072 \\ \hline
7 & -0.3757992 & 0.53746917 & 0.27258653 & 0.59400245 & 0.45561584 & 1.15904589 & 0.85480648 & -3942.282 & 311.024149 & 1334.31509 \\ \hline
8 & -0.3720457 & 0.54080089 & 0.34703531 & 0.59877995 & 0.39874986 & 1.19012745 & 0.77333778 & -3941.9678 & 149.436218 & 456.462452 \\ \hline
9 & -0.4444434 & 0.54810179 & 0.25878792 & 0.6050535 & 0.42847021 & 1.19922515 & 0.82450996 & -3940.7561 & 60.2708444 & 28.6684195 \\ \hline
10 & -0.3441381 & 0.52272263 & 0.43004067 & 0.62442718 & 0.34439011 & 1.23644524 & 0.68635525 & -3943.024 & 124.344571 & -24.244076 \\ \hline
11 & -0.5017613 & 0.51996381 & 0.10696263 & 0.6398709 & 0.49260319 & 1.22989815 & 0.96109059 & -3944.2023 & 750.354439 & 79.1624455 \\ \hline
12 & -0.464922 & 0.52029261 & 0.16112398 & 0.64757436 & 0.45341583 & 1.244032 & 0.87562873 & -3939.8235 & 28.3740415 & -45.843028 \\ \hline
13 & -0.4517403 & 0.50625129 & 0.10735806 & 0.66489045 & 0.50000558 & 1.26048531 & 0.93820178 & -3939.2922 & 32.2342218 & 24.6252342 \\ \hline
14 & -0.5002366 & 0.52533523 & 0.03249099 & 0.65180095 & 0.53878505 & 1.24546973 & 0.99281121 & -3939.0864 & 15.734157 & 5.37669567 \\ \hline
15 & -0.554305 & 0.52957008 & -0.1212493 & 0.66309258 & 0.62522975 & 1.25533267 & 1.11114322 & -3938.9553 & 20.0970341 & -17.832874 \\ \hline
16 & -0.5456362 & 0.52724881 & -0.1113246 & 0.66450083 & 0.62023329 & 1.25725928 & 1.09604543 & -3938.9055 & 0.55668317 & -5.7529288 \\ \hline
17 & -0.5290624 & 0.51842697 & -0.0911012 & 0.67267289 & 0.60948744 & 1.26568886 & 1.08236135 & -3938.9017 & 3.34852458 & -4.4085232 \\ \hline
18 & -0.526767 & 0.51697167 & -0.0887426 & 0.67410608 & 0.60833828 & 1.26707679 & 1.08074557 & -3938.9016 & 0.05964325 & -6.1240703 \\ \hline
19 & -0.5274084 & 0.51747602 & -0.0888218 & 0.67353239 & 0.608303 & 1.26650004 & 1.0807331 & -3938.9016 & 0.00739881 & -6.1603863 \\ \hline
20 & -0.5274013 & 0.51747026 & -0.0888194 & 0.67353905 & 0.60830278 & 1.26650655 & 1.0807325 & -3938.9016 & 4.56E-06 & -6.1618571 \\ \hline
21 & -0.5274012 & 0.5174702 & -0.0888191 & 0.6735391 & 0.60830263 & 1.26650659 & 1.08073231 & -3938.9016 & 5.66E-07 & -6.1618686 \\ \hline
22 & -0.5274011 & 0.51747019 & -0.0888191 & 0.6735391 & 0.6083026 & 1.2665066 & 1.08073228 & -3938.9016 & 9.25E-08 & -6.1618702 \\ \hline
23 & -0.5274011 & 0.51747019 & -0.0888191 & 0.67353911 & 0.6083026 & 1.2665066 & 1.08073227 & -3938.9016 & 1.50E-08 & -6.1618705 \\ \hline
24 & -0.5274011 & 0.51747019 & -0.0888191 & 0.67353911 & 0.6083026 & 1.2665066 & 1.08073227 & -3938.9016 & 2.45E-09 & -6.1618705 \\ \hline
25 & -0.5274011 & 0.51747019 & -0.0888191 & 0.67353911 & 0.6083026 & 1.2665066 & 1.08073227 & -3938.9016 & 3.87E-10 & -6.1618705 \\ \hline
26 & -0.5274011 & 0.51747019 & -0.0888191 & 0.67353911 & 0.6083026 & 1.2665066 & 1.08073227 & -3938.9016 & 6.67E-11 & -6.1618705 \\ \hline
\end{tabular}%
}
\end{table}

\begin{table}[ht]
\vspace{-0.3cm}
\centering
\caption{GTS(\textbf{$\beta_{+}$},\textbf{$\beta_{-}$},\textbf{$\alpha_{+}$},\textbf{$\alpha_{-}$},\textbf{$\lambda_{+}$},\textbf{$\lambda_{-}$}) Parameters Estimations for SPY ETF data}
\label{tab:6}
\vspace{-0.3cm}
\resizebox{\textwidth}{!}{%
\begin{tabular}{|c|c|c|c|c|c|c|c|c|c|c|}
\hline
\textbf{$Iterations$} & \textbf{$\mu$} & \textbf{$\beta_{+}$} & \textbf{$\beta_{-}$} & \textbf{$\alpha_{+}$} & \textbf{$\alpha_{-}$} & \textbf{$\lambda_{+}$} & \textbf{$\lambda_{-}$} & \textbf{$Log(ML)$} & \textbf{$||\frac{dLog(ML)}{dV}||$} & \textbf{$Max Eigen Value$} \\ \hline
1 & 0.03117789 & 0.38858855 & 0.45564901 & 0.63536034 & 0.44065667 & 1.19999352 & 0.80066172 & -3925.3636 & 71.5597807 & -0.271327 \\ \hline
2 & 0.00284966 & 0.3984325 & 0.44341049 & 0.63580752 & 0.44151872 & 1.20308769 & 0.80372187 & -3924.9767 & 64.0358701 & -1.3348134 \\ \hline
3 & -0.0188742 & 0.40608802 & 0.43162621 & 0.63492513 & 0.44433412 & 1.20357975 & 0.80928985 & -3924.7491 & 60.1230479 & -1.9189062 \\ \hline
4 & -0.0428858 & 0.41425675 & 0.41901201 & 0.63456553 & 0.44678346 & 1.20502555 & 0.81447863 & -3924.4959 & 55.6344633 & -2.4628347 \\ \hline
5 & -0.0673489 & 0.42245611 & 0.40593159 & 0.63430173 & 0.44925856 & 1.20667157 & 0.81975434 & -3924.2538 & 51.2562424 & -2.9006588 \\ \hline
6 & -0.0920084 & 0.43064823 & 0.39237468 & 0.6340158 & 0.45187601 & 1.20833783 & 0.82527098 & -3924.0284 & 47.0663973 & -3.2476602 \\ \hline
7 & -0.1169168 & 0.43885448 & 0.37825282 & 0.63368414 & 0.45468853 & 1.20999872 & 0.83110781 & -3923.8194 & 43.0341914 & -3.5267786 \\ \hline
8 & -0.1422131 & 0.44710211 & 0.36343322 & 0.6333181 & 0.45774079 & 1.21168158 & 0.83733848 & -3923.6251 & 39.1128746 & -3.757145 \\ \hline
9 & -0.1680813 & 0.45542017 & 0.34774111 & 0.63294478 & 0.46108339 & 1.2134386 & 0.84404886 & -3923.4443 & 35.2548844 & -3.9528906 \\ \hline
10 & -0.1947404 & 0.46384084 & 0.33095347 & 0.63259883 & 0.46478049 & 1.21533585 & 0.85134731 & -3923.2757 & 31.4143849 & -4.1238109 \\ \hline
11 & -0.2224442 & 0.47240032 & 0.31278605 & 0.6323193 & 0.46891728 & 1.21744973 & 0.8593752 & -3923.119 & 27.5471997 & -4.2762154 \\ \hline
12 & -0.2514828 & 0.48113769 & 0.29287359 & 0.63214977 & 0.47360991 & 1.21986745 & 0.86832067 & -3922.9746 & 23.6109225 & -4.4134458 \\ \hline
13 & -0.2821707 & 0.49008743 & 0.2707478 & 0.63214123 & 0.4790185 & 1.2226895 & 0.87843537 & -3922.8441 & 19.5680604 & -4.5357431 \\ \hline
14 & -0.3147707 & 0.49925064 & 0.24584352 & 0.63235833 & 0.4853581 & 1.22602959 & 0.89004186 & -3922.7309 & 15.4004254 & -4.6389257 \\ \hline
15 & -0.3491421 & 0.50848776 & 0.21769799 & 0.6328892 & 0.49286573 & 1.22999006 & 0.90345435 & -3922.6411 & 11.1634341 & -4.711419 \\ \hline
16 & -0.3832555 & 0.51712008 & 0.18716565 & 0.63384475 & 0.50147935 & 1.2345029 & 0.91839631 & -3922.5839 & 7.14673937 & -4.7366424 \\ \hline
17 & -0.4086577 & 0.5228833 & 0.16128441 & 0.6352204 & 0.50929013 & 1.23867111 & 0.9314548 & -3922.564 & 3.6912078 & -4.7507677 \\ \hline
18 & -0.414826 & 0.52369643 & 0.15305883 & 0.63632818 & 0.51190794 & 1.24055798 & 0.93563502 & -3922.5626 & 0.52220937 & -4.824878 \\ \hline
19 & -0.4145758 & 0.52350521 & 0.15317187 & 0.63653526 & 0.51178869 & 1.24078564 & 0.9354597 & -3922.5626 & 0.00088604 & -4.8353435 \\ \hline
20 & -0.4146011 & 0.52351555 & 0.15314403 & 0.6365283 & 0.5118022 & 1.24077848 & 0.93547969 & -3922.5626 & 4.86E-05 & -4.8354104 \\ \hline
21 & -0.4145979 & 0.52351432 & 0.15314792 & 0.63652914 & 0.51180025 & 1.24077938 & 0.9354768 & -3922.5626 & 6.74E-06 & -4.835384 \\ \hline
22 & -0.4145984 & 0.5235145 & 0.15314736 & 0.63652902 & 0.51180052 & 1.24077925 & 0.93547721 & -3922.5626 & 9.58E-07 & -4.8353877 \\ \hline
23 & -0.4145983 & 0.52351447 & 0.15314744 & 0.63652904 & 0.51180048 & 1.24077927 & 0.93547715 & -3922.5626 & 1.36E-07 & -4.8353871 \\ \hline
24 & -0.4145983 & 0.52351447 & 0.15314743 & 0.63652903 & 0.51180049 & 1.24077927 & 0.93547716 & -3922.5626 & 1.93E-08 & -4.8353872 \\ \hline
25 & -0.4145983 & 0.52351447 & 0.15314743 & 0.63652903 & 0.51180049 & 1.24077927 & 0.93547716 & -3922.5626 & 2.74E-09 & -4.8353872 \\ \hline
\end{tabular}%
}
\end{table}
\newpage
\section{GTS(\textbf{$\beta_{+}$},\textbf{$\beta_{-}$},\textbf{$\alpha_{+}$},\textbf{$\alpha_{-}$},\textbf{$\lambda_{+}$},\textbf{$\lambda_{-}$}) Parameters Estimations \\
by Newton – Raphson Iteration Algorithm (\ref{eq:l39})}\label{eq:an2}

\begin{table}[ht]
\vspace{-0.3cm}
\centering
\caption{GTS(\textbf{$\beta_{+}$},\textbf{$\beta_{-}$},\textbf{$\alpha_{+}$},\textbf{$\alpha_{-}$},\textbf{$\lambda_{+}$},\textbf{$\lambda_{-}$}) Parameters Estimations for Bitcoin BTC}
\label{tab:7}
\vspace{-0.3cm}
\resizebox{\textwidth}{!}{%
%
}
\end{table}

\end{document}